\newenvironment{figurehere}
  {\def\@captype{figure}}
  {}
\theoremstyle{plain}
\newtheorem{theorem}{Theorem}
\newtheorem{lemma}{Lemma}
\title{Graph kernels encoding features of all subgraphs \\ by quantum superposition}
\author{Kaito Kishi\thanks{Department of Biosciences \& Informatics, Keio University, Hiyoshi 3-14-1, Kohoku, Yokohama 223-8522, Japan} \and Takahiko Satoh\thanks{Quantum Computing Center, Keio University, Hiyoshi 3-14-1, Kohoku, Yokohama 223-8522, Japan} \and Rudy Raymond\footnotemark[2] \thanks{IBM Quantum, IBM Research - Tokyo, 19-21, Nihonbashi Hakozaki-cho Chuo-ku, Tokyo 103-8510, Japan} \and Naoki Yamamoto\footnotemark[2] \thanks{Department of Applied Physics and Physico-Informatics, Keio University, Hiyoshi 3-14-1, Kohoku, Yokohama 223-8522, Japan} \and Yasubumi Sakakibara\footnotemark[1]}
\begin{document}

\maketitle

\begin{abstract}
Graph kernels are often used in bioinformatics and network applications to measure the similarity 
between graphs; therefore, they may be used to construct efficient graph classifiers. 
Many graph kernels have been developed thus far, but to the best of our knowledge there is no 
existing graph kernel that considers all subgraphs to measure similarity. 
We propose a novel graph kernel that applies a quantum computer to measure the graph similarity taking 
all subgraphs into account by fully exploiting the power of quantum superposition to encode every subgraph 
into a feature.
For the construction of the quantum kernel, we develop an efficient protocol that removes the index 
information of subgraphs encoded in the quantum state. 
We also prove that the quantum computer requires less query complexity to construct the feature 
vector than the classical sampler used to approximate the same vector. 
A detailed numerical simulation of a bioinformatics problem is presented to demonstrate that, in 
many cases, the proposed quantum kernel achieves better classification accuracy than existing 
graph kernels.
\end{abstract}

\begin{multicols}{2}


\section*{Introduction}

An effective measure of the similarity between graphs is necessary in several science and 
engineering fields, such as bioinformatics, chemistry, and social networking \cite{Vishwanathan}. 
In machine learning, this measure is called the graph kernel, and it can be used to construct 
a classifier for graph data \cite{BorgwardtProteinPred}. 
In particular, a kernel in which all subgraphs are fully encoded is desirable, because it can access 
the complete structural information of the graph. 
However, constructing such a kernel is known as a nondeterministic polynomial time (NP)-hard 
problem \cite{Gartner}. 
Alternatively, a kernel that encodes partial features of {\it all} subgraphs may be used, but to 
our best knowledge, this type of method has not been developed thus far. 
Previous studies have instead focused on using different features originating from the target 
graphs, such as random walks \cite{Gartner, Kashima, FastVishwanathan}, graphlet sampling 
\cite{N-Shervashidze}, and shortest paths \cite{Borgwardt}.

A quantum computer may be applied to construct a graph kernel that covers all subgraphs, because 
of its strong expressive power, which has been demonstrated in the quantum machine learning 
scenario \cite{QMLReview,Ciliberto_2018}. 
More specifically, the exponentially large Hilbert space of quantum states may serve as an appropriate 
feature space where the kernel is induced \cite{QMLinFeatureHilbert, QSVM,schuld2021quantum}.
This kernel can then be further utilized for machine learning.
We now have two approaches: the implicit (hybrid) approach, in which the quantum computer computes 
the kernel, and the classical computer uses it for machine learning, and the explicit approach, in which 
both the kernel computation and the machine learning part are both executed by the quantum computer.

In fact, there exist a few proposals for the quantum computational approach to construct graph kernels. 
For example, Ref. \cite{GBS-kernel} proposed the Gaussian Boson sampler, which estimates feature 
vectors by sampling the number of perfect matchings in the set of subgraphs. 
Another method used a quantum walk \cite{QWalkCont, QWalkDisc}. 
However, there is no existing graph kernel that operates on a quantum circuit to design the features 
obtained from all subgraphs.

In this study, we propose a quantum computing method to generate a graph kernel that 
extracts important features from all subgraphs. 
The point of this method is that the features of an exponential number of subgraphs can be effectively 
embedded to a quantum state in a Hilbert space using quantum computing. 
Note that a naive procedure immediately induces a difficulty; the corresponding quantum state 
contains the index component, which may severely decrease the value of the kernel. 
It is generally difficult to \textit{forget} the index component, as argued in~\cite{Aharonov-SZKcomp}, 
which we simply refer to as \textit{removing} the index component; nonetheless, we propose a protocol 
to achieve this goal using a polynomial number of operations (i.e., query complexity) under a valid 
condition, which is fortunately satisfied by the features used in typical problems in bioinformatics. 
We then provide some concrete protocols to further compute the target kernel and discuss their 
query complexity. 
Also, we prove that they require fewer operations to generate the feature vector than a classical 
sampler used to approximate the same vector. 
Hence, up to the difference of the sense of complexities, the proposed protocols have quantum 
advantage. 
Lastly, we use the above-mentioned typical bioinformatics problem to investigate whether the 
classifier based on the proposed quantum kernel achieves a higher classification accuracy than 
existing graph classifiers.


\section*{Results}

\subsection*{Algorithm for graph kernel computation} 

We consider a graph characterized by the pair \(G=(V,E)\), where \(V=\{v_1,v_2,\cdots,v_n\}\) is 
an ordered set of \(n\) vertices, and \(E\subseteq V\times V\) is a set of undirected edges. 
Hereafter, we use the notation \(n=|V|\). 
Also, let \(d\) be the maximum degree of the graph \(G\). 
In this study, \(G\) is assumed to be a simple undirected graph that does not contain self-loops 
or multiple edges.

The first step of our algorithm is to encode the graph information of $G$ onto a quantum state 
defined on the composite Hilbert space ${\mathcal H}_{\rm index}\otimes {\mathcal H}_{\rm feature}$. 
The index space ${\mathcal H}_{\rm index}$ is composed of $n$ qubits, which identifies a subgraph 
characterized by a set of vertices represented by the binary sequence $x$ of length \(n\). 
The feature space ${\mathcal H}_{\rm feature}$ is composed of $m$ qubits, each state of which 
represents the feature information of a chosen subgraph $x$. 
The value of $m$ depends on what feature is used.
In this study, we consider the case $m=O(\log n)$, where each qubit represents the numbers of 
vertices, edges, and vertices with a degree of 1, 2, and 3; 
refer to Toy Example section for a concrete example.

Now, we assume an oracle operator that encodes the feature information of the chosen subgraph, 
identified by the index $x$, to the function $E(G,x)$ and then generates the feature state 
$\ket{E(G,x)}$. 
Then, using the superposition principle of quantum mechanics, we can generate the quantum 
state containing the features of {\it all} subgraphs of a graph \(G\) as follows: 
\begin{align} 
\label{eq:g_with_index}
    \ket{\bar{G}}:=\sum_{x\in\{0,1\}^n} \ket{x}\ket{E(G,x)}, 
\end{align}
where the normalized coefficient is omitted to simplify the notation.

Next, we aim to compute the similarity of two graphs $G$ and $G'$. 
For this purpose, it seems that the inner product of $\ket{\bar{G}}$ and $\ket{\bar{G}'}$ may be used. 
Note that when the two graphs have the same feature $E(G,x_1) = E(G',x_2)$ with different 
indices $x_1\neq x_2$, they should still contribute to the similarity of $G$ and $G'$, whereas the inner 
product of $\ket{x_1}\ket{E(G,x_1)}$ and $\ket{x_2}\ket{E(G',x_2)}$ is zero. 
Therefore, what we require is the state 
\begin{align} 
\label{eq:g_without_index}
    \ket{G}:=\sum_{x\in\{0,1\}^n}\ket{E(G,x)}, 
\end{align}
instead of Eq. \eqref{eq:g_with_index}. 
However, an exponential number of operations is generally necessary to remove 
the index state \cite{Aharonov-SZKcomp}. 
The first contribution of our study is that our algorithm only needs a polynomial number of operations to obtain 
Eq.~\eqref{eq:g_without_index} from Eq.~\eqref{eq:g_with_index} under a condition that 
may be satisfied in features useful for many graph classification problems.

\begin{figurehere}
    \begin{align}
        \Qcircuit @C=1em @R=.7em {
            \lstick{\ket{0^n}} & {/} \qw & \multigate{1}{\textrm{Oracle}} & \gate{H^{\otimes n}} & \meter & \qw \\
            \lstick{\ket{0..}} & {/} \qw & \ghost{\textrm{Oracle}}        & \qw                  & \qw    & \qw
        }
    \end{align}
    \caption{Quantum circuit used to prepare the state \eqref{eq:g_without_index}.} 
    \label{qc:init-index-reg}
\end{figurehere} \noindent
\\

The algorithm to remove the index state is described in terms of the general discrete 
function \(f\) as follows (see also Fig.~\ref{qc:init-index-reg}): 
The state generated via the oracle operation is rewritten as 
\begin{equation}
      \frac{1}{\sqrt{2^n}}\sum_{x\in\{0,1\}^n} \ket{x}\ket{f(x)} 
          = \frac{1}{\sqrt{2^n}}\sum_{k=1}^a \sum_{x\in X_k} \ket{x}\ket{y_k}, 
\end{equation}
where \(X_k=\{x \, |\, f(x)=y_k,x\in\{0,1\}^n\}\). 
Also, \(a=|Y|\), where \(Y\) is the range of \(f\), i.e., \(Y=\{y_k\}_{k=1}^a\). 
Then, we apply \(H^{\otimes n} \otimes I \) to obtain 
\begin{equation} 
\label{eq:has-other-terms}
        \frac{1}{2^n}\ket{0^n}\left(\sum_{k=1}^a |X_k|\ket{y_k}\right)+\mathrm{other\ terms}, 
\end{equation}
where the ``other terms" contain all quantum states with index states other than $\ket{0^n}$. 
We now make a measurement in the computational basis of the index state; 
then, the post-selected feature state obtained when the measurement result is \(0^n\) is given by 
\begin{align}
\label{eq:thm1-final-state}
     \frac{1}{\sqrt{\sum_{k=1}^a |X_k|^2}} \sum_{x\in\{0,1\}^n} \ket{f(x)}, 
\end{align}
which is our target state. 
The probability to successfully obtain this state is 
\begin{equation} 
\label{eq:prob-to-init-ir}
        \Pr(0^n)=\frac{1}{2^{2n}}\sum_{k=1}^a |X_k|^2.
\end{equation}
To evaluate the efficiency of the proposed algorithm, we derive the minimum of the probability 
\eqref{eq:prob-to-init-ir} with respect to the family of set \(X=\{X_1,\cdots,X_k\}\) under the 
condition $\sum_{k=1}^a |X_k|-2^n=0$. 
Hence the problem is to maximize 
\begin{equation}
        L(X,\lambda)=-\sum_{k=1}^a |X_k|^2 - \lambda\left(\sum_{k=1}^a |X_k|-2^n\right), 
\end{equation}
where $\lambda$ is the Lagrange multiplier. 
Then, \(|X_k|=2^n/a\) maximizes \(L(X,\lambda)\); thus, Eq. \eqref{eq:prob-to-init-ir} has the 
following lower bound: 
\begin{equation} 
\label{eq:prob-lower-bound}
        \Pr(0^n)=\frac{1}{2^{2n}}\sum_{k=1}^a |X_k|^2 
              \geq \frac{1}{2^{2n}}a\left(\frac{2^n}{a}\right)^2=a^{-1}.
\end{equation}
The above result is summarized as follows:

\begin{theorem} \label{thm:rir}
The quantum circuit depicted in Fig.~\ref{qc:init-index-reg} generates the quantum state 
\eqref{eq:thm1-final-state} with a probability of at least \(a^{-1}\). 
\end{theorem}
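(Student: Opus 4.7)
My plan is to unroll the circuit of Fig.~\ref{qc:init-index-reg} step by step and then reduce the lower-bound claim to a one-line convexity inequality. First I would write the state just after the oracle in the form $\frac{1}{\sqrt{2^n}}\sum_{k=1}^{a}\sum_{x\in X_k}\ket{x}\ket{y_k}$, which is merely a re-indexing of the uniform superposition over $\{0,1\}^n$ by the level sets of $f$. Applying $H^{\otimes n}\otimes I$ and using $\langle 0^n|H^{\otimes n}|x\rangle = 1/\sqrt{2^n}$ for every $x$, the component on $\ket{0^n}$ of the index register becomes $\frac{1}{2^n}\sum_{k=1}^{a}|X_k|\ket{y_k}$, while the remaining index basis states make up the ``other terms'' of Eq.~\eqref{eq:has-other-terms}.

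Next I would post-select on the measurement outcome $0^n$. The Born probability $\Pr(0^n)=2^{-2n}\sum_{k=1}^{a}|X_k|^2$ is just the squared norm of the extracted feature component, matching Eq.~\eqref{eq:prob-to-init-ir}. The identity $\sum_{x\in\{0,1\}^n}\ket{f(x)}=\sum_{k=1}^{a}|X_k|\ket{y_k}$ then shows that, after renormalization, the post-selected feature state coincides exactly with Eq.~\eqref{eq:thm1-final-state}.

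Finally, to lower-bound $\Pr(0^n)$ I would invoke Cauchy--Schwarz on the partition sizes. Since $\{X_k\}_{k=1}^{a}$ partitions $\{0,1\}^n$, we have $\sum_{k=1}^{a}|X_k|=2^n$, so $\bigl(\sum_k|X_k|\bigr)^2 \le a\sum_k|X_k|^2$ gives $\sum_k|X_k|^2 \ge 2^{2n}/a$, and therefore $\Pr(0^n)\ge a^{-1}$. This is the same conclusion as the Lagrange-multiplier derivation in the excerpt: the minimum of $\sum_k|X_k|^2$ under the linear constraint is attained at the balanced partition $|X_k|=2^n/a$.

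There is no serious obstacle; the proof is essentially a direct unrolling of the circuit together with a standard convexity bound. The one point I would double-check is that the ``other terms'' in Eq.~\eqref{eq:has-other-terms} truly have no $\ket{0^n}$ component on the index register, so that both the post-selection probability and the normalized post-measurement feature state are correctly identified with the expressions above.
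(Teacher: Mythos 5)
Your proposal is correct, and the circuit analysis is identical to the paper's: the level-set decomposition $\frac{1}{\sqrt{2^n}}\sum_{k}\sum_{x\in X_k}\ket{x}\ket{y_k}$, the extraction of the $\ket{0^n}$ index component after $H^{\otimes n}\otimes I$, the Born probability $\Pr(0^n)=2^{-2n}\sum_k|X_k|^2$, and the identification of the renormalized post-measurement state with Eq.~\eqref{eq:thm1-final-state} all match Eqs.~\eqref{eq:has-other-terms}--\eqref{eq:prob-to-init-ir}. The only divergence is the final lower bound: the paper minimizes $\sum_k|X_k|^2$ subject to $\sum_k|X_k|=2^n$ by a Lagrange-multiplier argument, locating the balanced partition $|X_k|=2^n/a$ as the extremum, whereas you invoke Cauchy--Schwarz, $\bigl(\sum_k|X_k|\bigr)^2\le a\sum_k|X_k|^2$, to get $\sum_k|X_k|^2\ge 2^{2n}/a$ directly. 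The two arguments reach the same bound at the same extremal configuration, but yours is slightly tidier as a proof: it is a one-line inequality valid for any (integer-valued) partition, with no need to argue that the stationary point of the Lagrangian is actually the global minimizer over the feasible set. Your closing worry is also easily dispatched: the ``other terms'' are by construction the components along index basis states $\ket{z}$ with $z\neq 0^n$, hence orthogonal to the $\ket{0^n}$ branch, so the post-selection probability and post-measurement state are exactly as you computed.
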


Therefore, the quantum state \eqref{eq:thm1-final-state} can be effectively generated if $a=|Y|$ is 
of the order of a polynomial with respect to $\{y_k\}$. This desirable assumption holds 
in the encoding function $E(G,x)$ that is used in the bioinformatics problem analyzed later in this paper. 
For a general statement of this fact, we introduce the following constraint on the range of 
the function: 
the set 
\begin{align}
    X_v=\{x \, | \, f(x)\in Y_v,x\in\{0,1\}^n\}, ~ n\in\mathbb{N}
\end{align}
and \(Y_v\) specified by 
\begin{align}
    \bigcup_{v=0}^n Y_v=Y, ~~
    \bigcup_{v=0}^n \bigcup_{\substack{v'\neq v,\\v'\in[0,n]}}\left(Y_v\cap Y_{v'}\right)=\emptyset, 
\end{align}
are assumed to satisfy 
\begin{align} \label{eq:v_range_constraint}
    |X_v|=\binom{n}{v}, ~ |Y_v|=\begin{cases}
        1 & (v=0) \\
        {\rm Pol}(v^{c-1}) & ({\rm otherwise})
    \end{cases},
\end{align}
where ${\rm Pol}(v^{c-1})$ is a polynomial function with maximum degree $c-1$.  
These conditions lead to 
\begin{equation}
\label{eq:order of |Y|}
        a=|Y|=\sum_{v=0}^n {\rm Pol}(v^{c-1})= O(n^c).
\end{equation}
Then, Theorem \ref{thm:rir} can be further refined as follows 
(the proof is given in the Supplementary Information):

\begin{theorem} \label{thm:rir-v-range}
Given the condition \eqref{eq:v_range_constraint}, the algorithm depicted in Fig.~\ref{qc:init-index-reg} 
generates the state \eqref{eq:thm1-final-state} with a probability of at least \(\Omega(\sqrt{n}/n^c)\). 
\end{theorem}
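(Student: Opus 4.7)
The plan is to refine the bound $\Pr(0^n) = 2^{-2n}\sum_{k=1}^a |X_k|^2$ already established in the proof of Theorem~\ref{thm:rir} by exploiting the $v$-graded structure imposed by condition~\eqref{eq:v_range_constraint}. For each $v \in \{0,\ldots,n\}$, let $K_v = \{k : y_k \in Y_v\}$. Because the $Y_v$ are disjoint and cover $Y$, the $K_v$ partition $\{1,\ldots,a\}$; and because the preimages of distinct $y_k$ are disjoint, $\sum_{k \in K_v} |X_k| = |X_v| = \binom{n}{v}$ while $|K_v| = |Y_v|$.

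The central inequality is Cauchy--Schwarz applied inside each block, giving $\sum_{k \in K_v} |X_k|^2 \geq (\sum_{k \in K_v}|X_k|)^2/|K_v| = \binom{n}{v}^2/|Y_v|$, so that summing over $v$ yields $\sum_k |X_k|^2 \geq \sum_{v=0}^n \binom{n}{v}^2/|Y_v|$. I would then substitute the bound $|Y_v| = O(v^{c-1})$ from \eqref{eq:v_range_constraint} using the crude but sufficient estimate $|Y_v| \leq C n^{c-1}$ valid uniformly for $v \in \{0,\ldots,n\}$ and some constant $C > 0$.

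At this point the identity $\sum_{v=0}^n \binom{n}{v}^2 = \binom{2n}{n}$ gives $\sum_k |X_k|^2 \geq \binom{2n}{n}/(Cn^{c-1})$, and Stirling's approximation $\binom{2n}{n} \sim 4^n/\sqrt{\pi n}$ delivers $\Pr(0^n) \geq 4^{-n}\binom{2n}{n}/(Cn^{c-1}) = \Omega(\sqrt{n}/n^c)$, as required.

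The conceptually important step is the first one: grouping the sum by $v$ before applying Cauchy--Schwarz. A single Cauchy--Schwarz over all $a = O(n^c)$ indices would merely reproduce the weaker $\Pr(0^n) \geq 1/a = \Omega(1/n^c)$ of Theorem~\ref{thm:rir}; the extra $\sqrt{n}$ factor arises because $\binom{n}{v}^2$ concentrates around $v \approx n/2$, making $\sum_v \binom{n}{v}^2 = \binom{2n}{n}$ of order $4^n/\sqrt{n}$ rather than the $4^n/n^c$ one would obtain if the $|X_k|$ were equidistributed across all $a$ indices. No step looks genuinely hard, so the main care required is bookkeeping the crude polynomial bound on $|Y_v|$ so that the $\sqrt{n}$ gain is not lost.
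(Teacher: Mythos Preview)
Your proposal is correct and follows essentially the same route as the paper: group the sum $\sum_k|X_k|^2$ by the level $v$, bound each block from below (you via Cauchy--Schwarz, the paper via an equivalent Lagrange-multiplier argument) to obtain $\sum_v\binom{n}{v}^2/|Y_v|$, then apply the Vandermonde identity and Stirling. Your use of the uniform bound $|Y_v|\le Cn^{c-1}$ is in fact the sharp one needed to recover the extra $\sqrt{n}$; the paper states $|Y_v|\le\beta n^c$ but its final arithmetic only balances if one reads this as $n^{c-1}$, so your bookkeeping is the cleaner of the two.
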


In what follows we assume that the encoding function $E(G, x)$ satisfies the condition 
\eqref{eq:v_range_constraint}; then the desired state transformation \eqref{eq:g_with_index} 
$\to$ \eqref{eq:g_without_index} only requires a \(O(n^c/\sqrt{n})\) mean query complexity. 
As a consequence, we are now able to effectively compute the inner product 
\begin{align} 
\label{eq:kernel_def}
    \braket{G|G'}, 
\end{align}
as the similarity measure of the two graphs $G$ and $G'$ 
(note that both $\ket{G}$ and $\ket{G'}$ are real vectors). 
The task of computing \eqref{eq:kernel_def} is typically done via the swap test \cite{swaptest}. 
A diagram of the post-selection process from $\ket{0}\ket{\bar{G}}\ket{\bar{G}'}$ 
to $\ket{0}\ket{G}\ket{G'}$ is depicted in Fig. \ref{qc:swap-test}. 
The total mean query complexity required to prepare both $\ket{G}$ and $\ket{G'}$ and 
subsequently apply the swap test to compute the inner product \eqref{eq:kernel_def} is given by 
\begin{equation}
\label{simple complexity for inner}
    O(n^c/\sqrt{n}) + O(n'\mbox{}^c/\sqrt{n}) = O(n^c/\sqrt{n}), 
\end{equation}
where $n$ and $n'$ are assumed to be of the same order. 
Note that Eq.~\eqref{simple complexity for inner} contains the constant overhead required to repeat 
the swap test circuit to compute the inner product with a fixed approximation error. 
The resulting inner product computed by the swap test is represented by 
\begin{align}
\label{eq:kernel_def_nff}
    K_{\rm BH}(G, G')=k_{\rm BH}(G, G') f_G^\top f_{G'},
\end{align}
where $f_G=[|X_1|, \ldots, |X_a|]^\top$ is the column vector and the coefficient $k_{\rm BH}(G, G')$ 
is given by 
\begin{align} 
\label{eq:swap_test_coeff}
    k_{\rm BH}(G, G')=\frac{1}{\sqrt{\sum_{k=1}^a |X_k|^2}\sqrt{\sum_{k=1}^a |X_k'|^2}}.
\end{align}
Importantly, Eq.~\eqref{eq:kernel_def_nff} is exactly the Bhattacharyya (BH) kernel 
\cite{Bhattacharyya1943, BhattacharyyaKernel}, which has been successfully applied to image 
recognition \cite{BhattacharyyaKernel} and text classification \cite{BhattacharyyaKernelText}.

\begin{figurehere}
    \begin{align}
        \Qcircuit @C=1em @R=.7em {
            \lstick{\ket{0}}      & \qw     & \qw               & \qw    & \gate{H} & \ctrl{4} & \gate{H} & \meter \\
            \lstick{\ket{0^n}}    & {/} \qw & \multigate{1}{U}  & \meter & \qw      & \qw      & \qw      & \qw    \\
            \lstick{\ket{0..}}    & {/} \qw & \ghost{U}         & \qw    & \qw      & \qswap   & \qw      & \qw    \\
            \lstick{\ket{0^{n'}}} & {/} \qw & \multigate{1}{U'} & \meter & \qw      & \qw      & \qw      & \qw    \\
            \lstick{\ket{0..}}    & {/} \qw & \ghost{U'}        & \qw    & \qw      & \qswap   & \qw      & \qw
        }
    \end{align}
    \caption{
    Quantum circuit to compute the inner product \eqref{eq:kernel_def}, which is composed of the oracles 
    $U$ and $U'$ to produce $\ket{\bar{G}}$ and $\ket{\bar{G}'}$; this is followed by the post-selection 
    operation to obtain $\ket{G}\ket{G'}$ and the swap test. 
    The probability of obtaining $0$ as a result of the measurement on the first qubit is 
    $\Pr(0)=(1+|\braket{G|G'}|^2)/2$, which enables us to estimate the inner product 
    \eqref{eq:kernel_def}.} 
    \label{qc:swap-test}
\end{figurehere}

\subsection*{Alternative algorithm with switch test}

\begin{figurehere}
    \begin{align}
        \Qcircuit @C=1em @R=.7em {
            \lstick{\ket{0}}   & \qw     & \multigate{2}{\textrm{Oracle}} & \ctrlo{1}            & \ctrl{1}              & \qw    & \qw \\
            \lstick{\ket{0^n}} & {/} \qw & \ghost{\textrm{Oracle}}        & \gate{H^{\otimes n}} & \gate{H^{\otimes n'}} & \meter & \qw \\
            \lstick{\ket{0..}} & {/} \qw & \ghost{\textrm{Oracle}}        & \qw                  & \qw                   & \qw    & \qw
        }
    \end{align}
    \caption{Quantum circuit to prepare the state \eqref{eq:sp_g_without_index}. }
    \label{qc:init-index-reg-sp}
\end{figurehere} \noindent
\\

Here, we show that the use of the superposition 
\begin{equation}
\label{eq:sp_g_without_index}
    \ket{0}\ket{G}+\ket{1}\ket{G'} 
\end{equation}
can also be used to compute the inner product \eqref{eq:kernel_def} which eventually yields 
a different kernel than $K_{\rm BH}(G, G')$, yet using the same order of queries as the previous 
case. 
Similar to the previous case, the superposition \eqref{eq:sp_g_without_index} can be effectively 
generated by the post-selection operation on 
\begin{equation}
\label{eq:sp_g_with_index}
      \ket{0}\ket{\bar{G}}+\ket{1}\ket{\bar{G}'}, 
\end{equation}
using the circuit depicted in Fig.~\ref{qc:init-index-reg-sp}. 
We first apply the controlled \(H^{\otimes n}\) and \(H^{\otimes n'}\) on the 
index state of Eq.~\eqref{eq:sp_g_with_index} and then post-select the state when the measurement 
result on the index is \( 0^n \). 
The formal statement of the result in terms of the general discrete functions $f$ and $g$ is given 
as follows 
(the proof is given in the Supplementary Information):

\begin{theorem} \label{thm:rir-qf}
Let \(Y\) and \(Y'\) be the ranges of \(f\) and \(g\), respectively, and let \(a=|Y|\) and \(a'=|Y'|\). Thus, \(Y=\{y_k\}_{k=1}^a\) and \(Y'=\{y_k'\}_{k=1}^{a'}\) with \(y_k\) and \(y_k'\) elements of \(Y\) 
and \(Y'\), respectively. 
Also, let \(X_k=\{x \, |\, f(x)=y_k,x\in\{0,1\}^n\}\) and \(X_k'=\{x \, |\, g(x)=y_k',x\in\{0,1\}^{n'}\}\) for 
\(n\geq n'\). 
The quantum circuit depicted in Fig.~\ref{qc:init-index-reg-sp} uses the oracle to prepare the state 
\begin{align}
        \frac{1}{\sqrt{2}}&\left(\frac{\ket{0}\sum_{x\in\{0,1\}^{n}}\ket{x}\ket{f(x)}}{\sqrt{2^{n}}} \right. \\
        &+ \left. \frac{\ket{1}\ket{0^{n-n'}}\sum_{x\in\{0,1\}^{n'}}\ket{x}\ket{g(x)}}{\sqrt{2^{n'}}}\right). \label{eq:thm2-init-state}
\end{align}
Then, the final feature state of the circuit, which is post-selected when the measurement result is $0^n$ 
in the index state, is given by 
\begin{equation}
        \frac{1}{N_{fg}}\left(\frac{\ket{0}\sum_{x\in\{0,1\}^{n}}\ket{f(x)}}{2^{n}}
            +\frac{\ket{1}\sum_{x\in\{0,1\}^{n'}}\ket{g(x)}}{2^{n'}}\right), 
\label{eq:thm2-final-state}
\end{equation}
where
\begin{equation}
        N_{fg}=\sqrt{\left(\frac{\sum_{k=1}^a|X_{k}|^2}{2^{2n}}
                  +\frac{\sum_{k=1}^{a'}|X_{k}'|^2}{2^{2n'}}\right)}.
\end{equation}
The probability of obtaining the state \eqref{eq:thm2-final-state} is at least \((a^{-1}+a'^{-1})/2\). 
\end{theorem}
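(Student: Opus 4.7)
The plan is to generalize the proof of Theorem \ref{thm:rir} by carrying the extra selector qubit through every step. First I would confirm that the oracle produces exactly the equal superposition \eqref{eq:thm2-init-state}: on the $\ket{0}$ branch it prepares $2^{-n/2}\sum_{x}\ket{x}\ket{f(x)}$ in the index and feature registers, while on the $\ket{1}$ branch the $n-n'$ excess index qubits remain in $\ket{0^{n-n'}}$ and the remaining $n'$ index qubits plus the feature register hold $2^{-n'/2}\sum_{x}\ket{x}\ket{g(x)}$, with the overall $1/\sqrt{2}$ coming from the balanced ancilla.

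Next I would analyze the two complementary controlled Hadamard layers, which effectively apply $H^{\otimes n}$ on the $\ket{0}$ branch and $H^{\otimes n'}$ on the $n'$ active index qubits of the $\ket{1}$ branch. Rewriting each branch using the level sets $X_k$ and $X_k'$, exactly as in the proof of Theorem \ref{thm:rir}, and reading off the coefficient of $\ket{0^n}$ on the index register immediately yields \eqref{eq:thm2-final-state} after renormalization. The key identification is that post-selecting on $\ket{0^n}$ over the full $n$-qubit index is equivalent, on the $\ket{1}$ branch, to post-selecting on $\ket{0^{n'}}$ over the active qubits, because the remaining $n-n'$ qubits are already in $\ket{0^{n-n'}}$ from the oracle.

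For the probability bound, I would compute the squared norm of the post-selected unnormalized vector, obtaining $\Pr(0^n)=\tfrac{1}{2}(2^{-2n}\sum_{k}|X_k|^2+2^{-2n'}\sum_{k}|X_k'|^2)=N_{fg}^2/2$. Each sum is independently bounded below by the Lagrange multiplier argument that produced \eqref{eq:prob-lower-bound}, giving $2^{-2n}\sum_{k}|X_k|^2\ge a^{-1}$ and analogously $2^{-2n'}\sum_{k}|X_k'|^2\ge a'^{-1}$. Adding and halving then yields the claimed $(a^{-1}+a'^{-1})/2$.

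The only real obstacle I anticipate is the book-keeping for the padded register on the $\ket{1}$ branch, pinned down by the observation above; once that equivalence is established, the rest is a direct two-branch analogue of the single-function case handled in Theorem \ref{thm:rir}.
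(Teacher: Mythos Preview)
Your proposal is correct and follows essentially the same route as the paper: apply the two controlled Hadamard layers branch-by-branch, rewrite each branch via the level sets $X_k$ and $X_k'$, read off the $\ket{0^n}$ component, and bound each of the two resulting sums using the Lagrange-multiplier inequality \eqref{eq:prob-lower-bound}. Your explicit remark that post-selecting on $\ket{0^n}$ in the $\ket{1}$ branch reduces to post-selecting on the active $n'$ qubits (because the padding is already $\ket{0^{n-n'}}$) is the only extra bookkeeping beyond Theorem~\ref{thm:rir}, and the paper handles it the same way, just more implicitly.
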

\mbox{}

As in the previous case, by imposing the encoding functions $E(G, x)$ and $E(G', x)$ to satisfy 
the condition \eqref{eq:v_range_constraint}, the quantum circuit depicted in Fig.~\ref{qc:init-index-reg-sp} 
transforms Eq.~\eqref{eq:sp_g_with_index} to Eq.~\eqref{eq:sp_g_without_index} with a probability of 
$\Omega(\sqrt{n}/n^c + \sqrt{n'}/n'\mbox{}^c) = \Omega(\sqrt{n}/n^c)$, where $n$ and $n'$ are 
assumed to be of the same order. 
The proof of this result is the same as that of Theorem \ref{thm:rir-v-range}.

\begin{figurehere}
    \begin{align}
        \Qcircuit @C=1em @R=.7em {
            \lstick{\ket{0}}   & \gate{H} & \ctrlo{1}        & \ctrl{1}          & \qw    & \gate{H} & \meter \\
            \lstick{\ket{0^n}} & {/} \qw  & \multigate{1}{U} & \multigate{1}{U'} & \meter & \qw      & \qw \\
            \lstick{\ket{0..}} & {/} \qw  & \ghost{U}        & \ghost{U'}        & \qw    & \qw      & \qw
        }
    \end{align}
    \caption{
    Quantum circuit to compute the inner product \eqref{eq:kernel_def}, which is composed of the oracles 
    $U$ and $U'$ to produce $\ket{0}\ket{\bar{G}}+\ket{1}\ket{\bar{G}'}$. This is followed by the 
    post-selection operation to obtain $\ket{0}\ket{G}+\ket{1}\ket{G'}$ and the switch test. 
    Additionally, \(n\geq n'\) is assumed. 
    The probability of obtaining $0$ as a result of a measurement on the first qubit is 
    $\Pr(0)=(1+\braket{G|G'})/2$, which enables us to estimate the inner product \eqref{eq:kernel_def}. }
    \label{qc:switch-test}
\end{figurehere}
\mbox{}

We have now obtained the state $\ket{0}\ket{G}+\ket{1}\ket{G'}$, which enables the application 
of the switch test \cite{SWITCH-test-1st,SWITCH-test-called} to compute the inner product 
\eqref{eq:kernel_def}. 
The circuit diagram, which contains the post-selection operation, is shown in Fig. \ref{qc:switch-test}. 
The total query complexity to compute the inner product \eqref{eq:kernel_def} is $O(n^c/\sqrt{n})$. 
The resulting inner product computed through the switch test, which we call the SH kernel, is given by 
\begin{align}
\label{eq:kernel_def_SH}
    K_{\rm SH}(G, G') = k_{\rm SH}(G, G') f_G^\top f_{G'},
\end{align}
where, again, $f_G=[|X_1|, \ldots, |X_a|]^\top$, and the coefficient $k_{\rm SH}(G, G')$ is given by 
\begin{equation}
\label{eq:switch_test_coeff}
    k_{\rm SH}(G, G')
       =\frac{2}{\frac{2^{n'}}{2^n}\sum_{k=1}^a |X_k|^2 + \frac{2^n}{2^{n'}}\sum_{k=1}^a |X_k'|^2}.
\end{equation}
In the Supplementary Information, we prove that this is a positive semidefinite kernel. 
Also we show there that $K_{\rm SH}(G, G') \leq K_{\rm BH}(G, G')$ holds, implying that the SH 
kernel may give a conservative classification performance than BH. 
Note that the generalized T-student kernel \cite{GeneralizedTStudentKernelUsage} has a similar form.


\subsection*{Improved algorithm with amplitude amplification}

\begin{figurehere}
    \begin{align}
        \Qcircuit @C=1em @R=.7em {
            \lstick{\ket{0^n}\ket{0..}} & {/} \qw & \gate{\textrm{Oracle}} & \qw \gategroup{1}{3}{1}{3}{.7em}{--} \\
                                        &         & \push{\text{AA}}
        } \\ \\
        \Qcircuit @C=1em @R=.7em {
            \lstick{\ket{0}}            & \qw     & \multigate{1}{\textrm{Oracle}} & \qw \\
            \lstick{\ket{0^n}\ket{0..}} & {/} \qw & \ghost{\textrm{Oracle}}        & \qw \gategroup{1}{3}{2}{3}{.7em}{--} \\
                                        &         & \push{\text{AA}}
        }
    \end{align}
    \caption{Quantum circuit with amplitude amplification to prepare the state 
    \eqref{eq:g_without_index} (upper) and \eqref{eq:sp_g_without_index} (lower). 
    The circuit that is iterated to realize the amplitude amplification is enclosed in the dashed line.}
    \label{qc:g-grover}
\end{figurehere}
\mbox{}
\\

Recall that we used post-selection on the state \eqref{eq:has-other-terms}: 
\begin{equation*} 
        \frac{1}{2^n}\ket{0^n}\left(\sum_{k=1}^a |X_k|\ket{y_k}\right)+\mathrm{other\ terms}, 
\end{equation*}
to probabilistically produce the first term 
\begin{align}
\label{target state AA}
    \frac{1}{2^n}\ket{0^n}\left(\sum_{k=1}^a |X_k|\ket{y_k}\right).
\end{align}
We can enhance the first term using the amplitude amplification (AA) operation 
\cite{Grover,AmplitudeAmplification}) to {\it deterministically} produce the same state 
\eqref{target state AA}. 
The clear advantage of AA is that it requires the square root of the number of operations 
to obtain this state. This is preferable over the previous repeat-until-success strategy. 
This means that the query complexity to transform Eq.~\eqref{eq:g_with_index} to 
Eq.~\eqref{eq:g_without_index} via AA is 
\begin{align}
    O\left(\sqrt{a/\sqrt{n}}\right)=O(\sqrt{a}/n^{1/4}). 
\end{align}
Similarly, transforming Eq. \eqref{eq:sp_g_with_index} to Eq. \eqref{eq:sp_g_without_index} 
via AA requires a query complexity of $O(\sqrt{a}/n^{1/4})$. 
These circuits are depicted in Fig. \ref{qc:g-grover}; note that the measurement on the index 
state is not necessary.

The circuit that includes the swap test and AA to compute the inner product 
$\braket{G|G'} = K_{\rm BH}(G, G')$ is depicted in Fig.~\ref{qc:swap-switch-test-grover} (upper). 
The circuit length is 
\begin{align}
    O(\sqrt{a}/n^{1/4})+O(\sqrt{a}/n^{1/4})=O(\sqrt{a}/n^{1/4}).
\end{align}
Also, the circuit containing the switch test and AA is depicted in Fig.~\ref{qc:swap-switch-test-grover} 
(lower); as in the case of swap test, the circuit length is $O(\sqrt{a}/n^{1/4})$. 
In particular, if the encoding functions $E(G, x)$ and $E(G', x)$ satisfy the condition 
\eqref{eq:v_range_constraint}, then we can specify $a=O(n^c)$.

\begin{figurehere}
    \begin{align}
        \Qcircuit @C=1em @R=.7em {
            \lstick{\ket{0}}               & \qw     & \qw             & \gate{H} & \ctrl{3} & \gate{H} & \meter \\
            \lstick{\ket{0^n}\ket{0..}}    & {/} \qw & \gate{U}        & \qw      & \qswap   & \qw      & \qw    \\
                                           &         & \push{\text{AA}} & & & & \\
            \lstick{\ket{0^{n'}}\ket{0..}} & {/} \qw & \gate{U'}       & \qw      & \qswap   & \qw      & \qw \gategroup{2}{3}{2}{3}{.7em}{--} \gategroup{4}{3}{4}{3}{.7em}{--} \\
                                           &         & \push{\text{AA}}
        } \\ \\
        \Qcircuit @C=1em @R=.7em {
            \lstick{\ket{0}}            & \gate{H} & \ctrlo{1} & \ctrl{1}  & \gate{H} & \meter \\
            \lstick{\ket{0^n}\ket{0..}} & {/} \qw  & \gate{U}  & \gate{U'} & \qw      & \qw \gategroup{1}{2}{2}{4}{.7em}{--}
        } \\
        \Qcircuit @C=1em @R=.7em {
            \push{\text{AA}} & & & & & & & &
        }
    \end{align}
    \caption{
    Quantum circuit to compute the inner product \eqref{eq:kernel_def}; it is composed of the oracles 
    enhanced via AA followed by the swap test (upper) and the switch test (lower).}
    \label{qc:swap-switch-test-grover}
\end{figurehere}

\subsection*{Time complexity for specific encoding functions}

Here, we discuss the time complexity that takes into account the number of elementary operations 
contained in the oracle. 
We particularly investigate the following two encoding functions: 
\begin{align}
    E_{ve}(G,x)&=[\#v,\#e], \label{eq:encode_ve} \\
    E_{ved}(G,x)&=[\#v,\#e,\#d1,\#d2,\#d3], 
\label{eq:encode_ved}
\end{align}
where \(\#v\) is the number of vertices of the subgraph specified by $x$, and \(\#e\) is the number 
of edges of $x$. 
Additionally, \(\#dD\) \((D\in\{1,2,3\})\) denotes the number of vertices that have a degree of \(D\). 
Then, from Lemmas \ref{thm:v}, \ref{thm:e}, and \ref{thm:dD} given in the Supplementary Information, 
the time complexities required to calculate Eq. \eqref{eq:encode_ve} and Eq. \eqref{eq:encode_ved} 
are 
\begin{align}
    O(n(\log n)^2)+O(|E|(\log |E|)^2) 
    =O((n+|E|)(\log n)^2),
\end{align}
and 
\begin{align}
    &O(n(\log n)^2)+O(|E|(\log |E|)^2) \\
    & \hspace{1em} +3\cdot O(n((\log n)^2+d(\log d)^2)) \\
    &=O((n+|E|)(\log n)^2+nd(\log d)^2), 
\end{align}
respectively. 
Recall that $d$ is the maximum degree of the graph $G$. 
The cardinality \(a=|Y|\) for the range \(Y\) can be evaluated as 
\begin{align}
    a=O(n)\cdot O(n^2)=O(n^3),
\end{align}
in the case of Eq. \eqref{eq:encode_ve}, and 
\begin{align}
    a=O(n)\cdot O(n^2)\cdot O(n)\cdot O(n)\cdot O(n)=O(n^6),
\end{align}
in the case of Eq. \eqref{eq:encode_ved}. 
Note that Eqs. \eqref{eq:encode_ve} and \eqref{eq:encode_ved} satisfy the condition 
\eqref{eq:v_range_constraint}. 
Hence, the time complexity of the quantum algorithm, assisted by AA, is evaluated as follows: 
in the case of Eq. \eqref{eq:encode_ve}, it is
\begin{equation}
    O((n+|E|)(\log n)^2)\cdot O\left(\sqrt{n^3}/n^{1/4}\right) 
      =O(n^{3.25}(\log n)^2),
\end{equation}
and in the case of Eq. \eqref{eq:encode_ved}, it is 
\begin{align}
    &O((n+|E|)(\log n)^2+nd(\log d)^2)\cdot O\left(\sqrt{n^6}/n^{1/4}\right) \\
    &\hspace{1em} =O(n^{4.75}(\log n)^2),
\end{align}
where \(|E|=n^2\) and \(d=n\) are used.
In contrast, the time complexity of typical existing graph kernels is \(O(n^3)\) for the random walk method \cite{FastVishwanathan} and \(O(n^4)\) for the shortest 
paths method \cite{Borgwardt}. 
Hence, our quantum computing approach for kernel computation has a time complexity 
comparable to that of the typical classical approach. 
However, note again that our kernel reflects features from {\it all} subgraphs, which are not 
covered by existing methods.

\subsection*{Toy example}

\begin{figurehere}
    \centering
    \includegraphics[width=6cm]{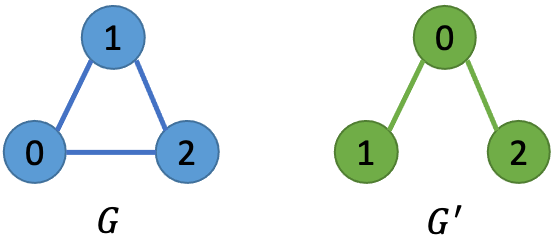}
    \caption{Structure of the toy graphs. }
    \label{fig:example-graphs}
\end{figurehere}
\mbox{}
\\

We consider two simple toy graphs, which are depicted in Fig. \ref{fig:example-graphs}, to 
demonstrate how to construct the corresponding quantum feature states $\ket{G}$ and $\ket{G'}$. 
The encoding function is chosen as $E_{ved}(G, x)$, given in Eq. \eqref{eq:encode_ved}. 
Note that both graphs have \(2^3=8\) induced subgraphs; thus, we need $n=3$ qubits 
to cover all subgraphs.

First, $\ket{\bar{G}}\in {\cal H}_{\rm index}\otimes {\cal H}_{\rm feature}$ is constructed as 
\begin{align}
    \ket{\bar{G}}&=\ket{000}\ket{0,0,0,0,0} + \ket{100}\ket{1,0,0,0,0} \\
      &+ \ket{010}\ket{1,0,0,0,0} + \ket{001}\ket{1,0,0,0,0} \\
      &+ \ket{110}\ket{2,1,2,0,0} + \ket{101}\ket{2,1,2,0,0} \\
      &+ \ket{011}\ket{2,1,2,0,0} + \ket{111}\ket{3,3,0,3,0}. 
\end{align}
For example, the term \(\ket{110}\ket{2,1,2,0,0}\) represents the state of the subgraph composed 
of the 0th and 1st vertices (thus, the index is 110); 
this subgraph has 2 vertices, 1 edge, 2 vertices with a degree of 1, 0 vertices with a degree of 2, 
and 0 vertices with a degree of 3 (thus, the feature is represented by $2,1,2,0,0$). 
Note again that the normalization constant is omitted. 
Therefore, the algorithm depicted in Fig.~\ref{qc:init-index-reg} enables us to remove the index state and 
arrive at the feature state $\ket{G}\in {\cal H}_{\rm feature}$: 
\begin{align}
    \ket{G}&=\ket{0,0,0,0,0}+3\ket{1,0,0,0,0} \\
       &+3\ket{2,1,2,0,0}+\ket{3,3,0,3,0}.
\end{align}
The state $\ket{\bar{G}'}$ can also be obtained in the same way as 
\begin{align}
    \ket{\bar{G}'}&=\ket{000}\ket{0,0,0,0,0} + \ket{100}\ket{1,0,0,0,0} \\
      &+ \ket{010}\ket{1,0,0,0,0} + \ket{001}\ket{1,0,0,0,0} + \\
      &+ \ket{110}\ket{2,1,2,0,0} + \ket{101}\ket{2,1,2,0,0} + \\
      &+ \ket{011}\ket{2,0,0,0,0} + \ket{111}\ket{3,2,2,1,0}, 
\end{align}
which leads to 
\begin{align}
    \ket{G'}&=\ket{0,0,0,0,0}+3\ket{1,0,0,0,0} \\
    &+2\ket{2,1,2,0,0}+\ket{2,0,0,0,0}+\ket{3,2,2,1,0}.
\end{align}
Hence, by considering the normalization factor, the inner product (i.e., the similarity 
of the graphs) is calculated as follows: 
in the case of the swap test, it is 
\begin{align}
    & \hspace{-1em} \braket{G|G'} = K_{\rm BH}(G, G')\\
    &=\frac{1\cdot 1 + 3\cdot 3 + 3\cdot 2 + 0\cdot 1 + 1\cdot 0 + 0\cdot 1}{\sqrt{1^2+3^2+3^2+1^2}\sqrt{1^2+3^2+2^2+1^2+1^2}} \\
    &\sim 0.8944,
\end{align}
and in the case of the switch test, it is 
\begin{align}
    &\hspace{-1em} \braket{G|G'} = K_{\rm SH}(G, G')\\
    &=\frac{2(1\cdot 1 + 3\cdot 3 + 3\cdot 2 + 0\cdot 1 + 1\cdot 0 + 0\cdot 1)}{\frac{2^3}{2^3}(1^2+3^2+3^2+1^2)+\frac{2^3}{2^3}(1^2+3^2+2^2+1^2+1^2)} \\
    &\sim 0.8889.
\end{align}
Note that they are certainly bigger than the value $\braket{\bar{G}|\bar{G}'}=0.75$ computed 
via the naive method.

\subsection*{Advantage over classical sampling method}

In the classical case, an exponentially large resource is necessary to compute a feature vector, 
corresponding to Eq.~\eqref{eq:g_without_index}, which considers all subgraphs; however, we can 
utilize an efficient classical sampling method that was used in \cite{N-Shervashidze} to approximate 
this feature vector. 
More specifically, we first sample $S$ subgraphs from the entire graph $G$ and then apply the 
encoding function $E(G,x)$, which satisfies Eq. \eqref{eq:v_range_constraint}, on those subgraphs 
to approximate the feature vector. 
The $k$th component of this vector is given by 
\begin{align}
        \hat{P}_{\mathbf{x}^S}(y_k)=\frac{1}{S}\sum_{i=1}^S 1( E(G, x_i)=y_k),
\end{align}
where $\mathbf{x}^S=\{ x_1,...,x_S\}$ are the set of indices that identify the sampled subgraphs. 
Also, $1(A)$ represents the indicator function, which takes 1 when the condition $A$ is satisfied 
and zero otherwise. 
Now, the $k$th component of the true feature vector is represented as
\begin{align} 
\label{eq:true-prob-dist}
        P(y_k)=\frac{|X_k|}{2^n}.
\end{align}
We then have the following theorem (the proof is given in the Supplementary Information):

\begin{theorem} 
\label{thm:classical-sample-size}
For a given $\epsilon>0$ and $\delta>0$, 
\begin{align}
    S=O\left(\frac{a-\log{\delta}}{\epsilon^2\log{n}}\right)
\end{align}
samples suffice to ensure that 
\begin{align} 
\label{eq:cl-l1-bound}
        \Pr\left(\left\|P-\hat{P}_{\mathbf{x}^S}\right\|_1\geq\epsilon\right)\leq\delta, 
\end{align}
where $\| \cdot \|_1$ denotes the $L_1$ norm. 
In particular, for constant $\epsilon$ and $\delta$, we have that $S=O(a/\log{n})$. 
\end{theorem}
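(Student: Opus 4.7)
The plan is to recognise $\hat P_{\mathbf x^S}$ as the empirical distribution of a multinomial sample drawn from $P$ on the alphabet $Y=\{y_1,\dots,y_a\}$, and then apply a sharp $L_1$ concentration inequality for such empirical distributions. Since the indices $x_1,\dots,x_S$ are i.i.d.\ uniform on $\{0,1\}^n$, the indicator $\mathbf{1}(E(G,x_i)=y_k)$ is Bernoulli with mean $|X_k|/2^n$, which by Eq.~\eqref{eq:true-prob-dist} equals $P(y_k)$. Consequently $(S\hat P_{\mathbf x^S}(y_1),\dots,S\hat P_{\mathbf x^S}(y_a))$ is exactly multinomially distributed with parameters $(S,P)$, so the problem is reduced to a classical question on $L_1$ deviations of multinomial empirical laws supported on $a$ atoms.

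For the concentration step I would invoke the Bretagnolle--Huber--Carol inequality (equivalently, Weissman's $L_1$ bound):
$$\Pr\bigl(\|\hat P_{\mathbf x^S}-P\|_1\geq\epsilon\bigr)\;\leq\;(2^a-2)\,\exp\!\bigl(-S\epsilon^2/2\bigr)\;\leq\;2^a\exp\!\bigl(-S\epsilon^2/2\bigr).$$
Requiring the right-hand side to be at most $\delta$ and taking logarithms yields the condition $S\geq 2(a\ln 2-\ln\delta)/\epsilon^2$, which already gives a sample complexity of order $(a-\log\delta)/\epsilon^2$. This is where the bulk of the theorem comes from; the only thing remaining is the $1/\log n$ refinement in the denominator.

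To recover the stated $1/\log n$ sharpening, I would exploit the structural hypothesis of Eq.~\eqref{eq:v_range_constraint}, under which Eq.~\eqref{eq:order of |Y|} forces $a=O(n^c)$, and hence $\log a=\Theta(\log n)$. The idea is to replace the generic $2^a$ prefactor in Bretagnolle--Huber--Carol by a finer count of effective ``types'' compatible with the constraint, so that $a$ in the exponent is effectively discounted by the $\Theta(\log n)$ scale it already encodes; absorbing this factor into the exponential and rebalancing against $\epsilon^2$ moves a $\log n$ into the denominator of the sample size. The constant-$\epsilon,\delta$ corollary $S=O(a/\log n)$ is then immediate, and the classical complexity $S=O(n^c/\log n)$ (via $a=O(n^c)$) is the comparison baseline used later in the paper to claim a quantum advantage.

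The main obstacle is precisely this last refinement. Bretagnolle--Huber--Carol is off-the-shelf and clearly delivers the $(a-\log\delta)/\epsilon^2$ rate; the technical work lies in explaining rigorously \emph{why} the polynomial scaling of $a$ together with the partition $\{X_v\},\{Y_v\}$ of Eq.~\eqref{eq:v_range_constraint} trims the $a$-dependence by a factor of $\log n$. I would try two routes: (i) a method-of-types bound that replaces $2^a$ with $\binom{S+a-1}{a-1}$ and then uses $a=O(n^c)$ to convert the combinatorial prefactor into a $\mathrm{poly}(n)$ term, and (ii) a per-layer Chernoff plus union bound across the $n+1$ layers $Y_0,\dots,Y_n$, each of size at most $\mathrm{Pol}(v^{c-1})$, so that the union-bound penalty scales like $\log n$ rather than $a$. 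The care needed is to ensure that the tightening genuinely uses the structural hypothesis instead of merely rewriting the previous bound.
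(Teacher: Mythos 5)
There is a genuine gap, and you have located it yourself: your argument delivers $S=O((a-\log\delta)/\epsilon^2)$ but not the $\log n$ in the denominator, and neither of the two routes you sketch for recovering it is the mechanism that actually works. The $\log n$ does \emph{not} come from trimming the combinatorial prefactor $2^a$. Your route (i) is a dead end because $\binom{S+a-1}{a-1}$ with $a=O(n^c)$ is of order $S^{\,\mathrm{poly}(n)}$, far larger than $2^a$ rewritten usefully; and route (ii) cannot work either, since the $L_1$ deviation over all $a$ atoms inherently forces a union over exponentially many sign patterns of the per-atom deviations, so splitting into $n+1$ layers still leaves a $2^{|Y_v|}$ cost inside each layer and the total prefactor does not drop to $\mathrm{poly}(n)$.

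The missing idea is that the Weissman et al.\ inequality you name-drop is sharper than the Bretagnolle--Huber--Carol form you actually wrote down: its exponent is $-S\varphi(\pi_P)\epsilon^2/4$ rather than $-S\epsilon^2/2$, where $\varphi(p)=\frac{1}{1-2p}\ln\frac{1-p}{p}$ and $\pi_P=\max_k\min\{P(y_k),1-P(y_k)\}$. The structural hypothesis \eqref{eq:v_range_constraint} enters precisely here, and only here: since $P(y_k)=|X_k|/2^n$ and each level set has $|X_v|=\binom{n}{v}$, the largest atom satisfies $\pi_P\leq\binom{n}{\lfloor n/2\rfloor}2^{-n}\sim\sqrt{2/(\pi n)}$, so $\varphi(\pi_P)=\Omega(\ln(1/\pi_P))=\Omega(\log n)$. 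Substituting this into the exponent and solving $2^a e^{-S\varphi(\pi_P)\epsilon^2/4}\leq\delta$ gives $S=O\bigl((a-\log\delta)/(\epsilon^2\log n)\bigr)$ directly, with the prefactor $2^a$ left untouched. Your reduction to a multinomial sample and the identification $P(y_k)=|X_k|/2^n$ are correct and match the paper, but without the $\varphi(\pi_P)$ refinement the theorem as stated is not reached.
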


Therefore, the sample complexity of this classical method is $O(a/\log{n})$, whereas the query 
complexity of the proposed quantum algorithm is $O(a/\sqrt{n})$. 
Hence, up to the difference of the sense of complexities, the proposed method has a clear 
computational  advantage. 
Note that the inner product $\braket{G|G'}$ that is computed using the above classical sampling 
method is given by 
\begin{align}
    \braket{G|G'} = \frac{1}{(\sum_{k=1}^a |X_k|)(\sum_{k=1}^a |X_k'|)} f_G^\top f_{G'}, 
\end{align}
which differs from that computed in the quantum case \eqref{eq:kernel_def_nff} or 
\eqref{eq:kernel_def_SH}.

\subsection*{Numerical experiment}

Here we study the performance of classifiers constructed based on the proposed quantum kernel, 
with comparison to some typical classical classifiers. 
The quantum kernel was calculated, not using the quantum algorithm but via the direct calculation 
of the inner product \eqref{eq:kernel_def}, in an ideal noise-free environment on a GPU; 
for the details, see the Code Availability section in Supplementary Information. 
We calculate both the BH kernel \eqref{eq:kernel_def_nff} and the SH kernel \eqref{eq:kernel_def_SH}, 
for the two different encoding functions $E_{ve}$ given in Eq.~\eqref{eq:encode_ve} and $E_{ved}$ 
given in Eq.~\eqref{eq:encode_ved}. 
We compare the quantum graph kernels to the following classical graph kernels: 
the random walk kernel (RW) \cite{FastVishwanathan}, the graphlet sampling kernel (GS) 
\cite{N-Shervashidze}, and the shortest path kernel (SP) \cite{Borgwardt}. 
These three classical kernels are simulated using Python's \textit{GraKeL} library \cite{grakel}.


We use the following benchmark datasets obtained from the repository of the Technical University 
of Dortmund \cite{datasets}. 
For the case of binary classification problems, we used: 
AIDS (chemical compounds with or without evidence of anti-HIV activity \cite{AIDS}); 
BZR\_MD (dataset BZR of active or inactive benzodiazepine receptors \cite{ER}; 
converted to complete graphs \cite{ER_MD}); 
ER\_MD (dataset ER of active or inactive estrogen receptors \cite{ER}; 
converted to complete graphs \cite{ER_MD}); 
IMDB-BINARY (the movie genre is action or romance based on its co-starring 
relationship \cite{IMDB}); 
MUTAG (chemical compounds with or without mutagenicity \cite{MUTAG}); 
and PTC\_FM (chemical compounds in the PTC dataset \cite{PTC} that are carcinogenic 
or non-carcinogenic to female mice \cite{ER_MD}). 
As for the multi-class classification problems, we used: 
15-classes Fingerprint (fingerprint images converted to graphs and divided by type \cite{Fingerprint}) 
and 
3-classes IMDB-MULTI (the movie genre is comedy, romance, or sci-fi based on its co-starring 
relationship \cite{IMDB}). 
Due to the limitation of the GPU memory, we took graphs with less than or equal to 28 vertices. 
As a result, (the number of graphs)/(the total number of graphs) are 
1774/2000 for AIDS, 
296/306 for BZR\_MD, 
398/446 for ER\_MD, 
860/1000 for IMDB-BINARY, 
188/188 for MUTAG, 
331/349 for PTC\_FM, 
2148/2148 (excluding graphs with \(\#edges\) less than 1) for Fingerprint, and 
1406/1500 for IMDB-MULTI.
The necessary number of qubits is $28+\log{28}+\log{(28\times 27/2)}\sim 41$ for the case 
$E_{ve}$ and $28+\log{28}+\log{(28\times 27/2)}+\log{28}+\log{28}+\log{28}\sim 56$ for the 
case $E_{ved}$.

We apply the $C$-support vector machine (SVM), implemented via \textit{Scikit-learn} 
\cite{scikit-learn}, to classify the dataset. 
To evaluate the classification performance, we calculate the mean test accuracy, by running 
10 repeats of a double 10-fold cross-validation. 
In addition, we calculate {\it F-measure}, which is used when the number of data in different 
classes are unbalanced; in fact, the numbers of data of two classes are 63 and 125 for MUTAG 
and 400 and 1600 for AIDS. 
The SVM parameter $C$ is taken from the discrete set $\{10^{-4}, 10^{-3}, \ldots, 10^3\}$, and 
the best model with respect to $C$ is used to compute the classification performance. 
The result are summarized in Table~\ref{tb:graphkernels}. 

\end{multicols}

\begin{table}[H]
    \centering
    \caption{
    Mean test accuracy (upper) and F-measure (lower) of the $C$-SVM constructed with each kernel; 
    the errors are the standard deviation between 10 repetitions of the double 10-fold cross-validation. 
    QK is the proposed quantum kernel. 
    RW, GS, and SP are the classical graph kernels. 
    Macro-F1 is used in the multiclass graph datasets Fingerprint and IMDB-MULTI. 
    The bold indicates the best performing value in the dataset.}
    \label{tb:graphkernels}
    \scalebox{0.8}[0.9]{
    \begin{tabular}{l|ccccccc} \hline
        Dataset & QK (BH\([ve]\)) & QK (BH\([ved]\)) & QK (SH\([ve]\)) & QK (SH\([ved]\)) & RW & GS & SP \\ \hline
        AIDS & \(\mathbf{99.79}\pm 0.06\) & \(99.68\pm 0.05\) & \(\mathbf{99.79}\pm 0.06\) & \(99.71\pm 0.05\) & \(99.66\pm 0.00\) & \(98.74\pm 0.16\) & \(99.67\pm 0.02\) \\
        BZR\_MD & \(64.23\pm 0.71\) & \(\mathbf{64.29}\pm 0.44\) & \(63.83\pm 1.01\) & \(63.83\pm 1.01\) & \(62.14\pm 0.76\) & \(53.90\pm 3.07\) & \(63.42\pm 1.02\) \\
        ER\_MD & \(66.05\pm 1.36\) & \(66.00\pm 1.35\) & \(\mathbf{66.28}\pm 1.05\) & \(\mathbf{66.28}\pm 1.05\) & \(60.73\pm 0.47\) & \(55.40\pm 1.40\) & \(65.58\pm 0.83\) \\
        IMDB-BINARY & \(\mathbf{70.16}\pm 0.90\) & \(69.85\pm 1.15\) & \(69.81\pm 0.60\) & \(69.85\pm 1.03\) & \(53.64\pm 0.72\) & \(42.05\pm 0.56\) & \(57.01\pm 1.09\) \\
        MUTAG & \(85.88\pm 0.59\) & \(87.01\pm 1.20\) & \(85.56\pm 0.73\) & \(86.79\pm 0.96\) & \(\mathbf{88.11}\pm 0.70\) & \(69.94\pm 1.30\) & \(86.73\pm 1.30\) \\
        PTC\_FM & \(\mathbf{60.82}\pm 1.30\) & \(60.12\pm 1.20\) & \(60.55\pm 1.24\) & \(60.15\pm 1.15\) & \(57.86\pm 0.99\) & \(52.29\pm 2.49\) & \(57.91\pm 0.74\) \\
        Fingerprint & \(46.85\pm 0.36\) & \(\mathbf{47.09}\pm 0.27\) & \(46.94\pm 0.33\) & \(47.01\pm 0.24\) & \(47.03\pm 0.29\) & \(42.82\pm 0.56\) & \(46.99\pm 0.29\) \\
        IMDB-MULTI & \(46.44\pm 0.23\) & \(47.13\pm 0.53\) & \(46.66\pm 0.48\) & \(\mathbf{47.57}\pm 0.48\) & \(35.49\pm 0.20\) & \(18.95\pm 0.34\) & \(42.53\pm 0.98\) \\
        \hline \hline
        AIDS & \(\mathbf{99.88}\pm 0.03\) & \(99.82\pm 0.03\) & \(\mathbf{99.88}\pm 0.03\) & \(99.84\pm 0.03\) & \(99.81\pm 0.00\) & \(99.30\pm 0.09\) & \(99.82\pm 0.01\) \\
        BZR\_MD & \(70.44\pm 0.55\) & \(\mathbf{70.46}\pm 0.42\) & \(69.87\pm 1.11\) & \(69.87\pm 1.11\) & \(62.28\pm 1.08\) & \(54.06\pm 3.27\) & \(66.70\pm 1.48\) \\
        ER\_MD & \(62.75\pm 1.75\) & \(62.73\pm 1.73\) & \(\mathbf{62.89}\pm 1.58\) & \(\mathbf{62.89}\pm 1.58\) & \(1.18\pm 2.52\) & \(41.53\pm 1.98\) & \(53.31\pm 1.95\) \\
        IMDB-BINARY & \(\mathbf{69.27}\pm 1.40\) & \(69.06\pm 1.60\) & \(68.47\pm 0.91\) & \(67.91\pm 1.23\) & \(25.06\pm 1.76\) & \(41.79\pm 0.77\) & \(66.87\pm 1.13\) \\
        MUTAG & \(89.15\pm 0.39\) & \(90.11\pm 0.86\) & \(88.96\pm 0.49\) & \(89.95\pm 0.64\) & \(\mathbf{90.74}\pm 0.54\) & \(76.85\pm 1.24\) & \(89.68\pm 0.95\) \\
        PTC\_FM & \(34.75\pm 1.81\) & \(36.40\pm 1.63\) & \(34.28\pm 2.03\) & \(35.16\pm 1.96\) & \(2.88\pm 4.42\) & \(\mathbf{40.99}\pm 3.08\) & \(2.72\pm 2.44\) \\
        Fingerprint & \(17.73\pm 0.17\) & \(17.81\pm 0.32\) & \(17.75\pm 0.13\) & \(\mathbf{17.84}\pm 0.08\) & \(17.32\pm 0.33\) & \(17.21\pm 0.22\) & \(16.74\pm 0.15\) \\
        IMDB-MULTI & \(44.67\pm 0.35\) & \(45.40\pm 0.61\) & \(44.91\pm 0.60\) & \(\mathbf{45.97}\pm 0.55\) & \(21.04\pm 0.37\) & \(17.17\pm 0.31\) & \(38.71\pm 1.31\) \\
        \hline
    \end{tabular}
    }
\end{table}

\begin{multicols}{2}

The table shows that, in many cases, the proposed quantum kernel achieves better classification 
accuracy than that obtained via the classical kernels. 
Note that the AIDS dataset is sparse and ER\_MD dataset is dense; the quantum kernels show the 
better performance in both cases, implying that they are not significantly affected by the density 
of the graph dataset. 
In many cases, the two kernels BH and SH show a similar performance, but there are some 
visible differences depending on the dataset; 
this may be due to the property $K_{\rm SH}(G, G') \leq K_{\rm BH}(G, G')$, which is proven in 
Lemma \ref{thm:cosinesimilarity-geq-ourkernel} in Supplementary Information. 
Also, although $E_{ved}$ has more features than $E_{ve}$, from the table we do not observe a clear 
superiority of the former over the latter in the classification accuracy; we will discuss further on this 
point in the next section.

Lastly let us check the probability for successfully removing the index register; 
recall from Theorem 2 that the lower bound is $\Omega(\sqrt{n}/a)$. 
Figure~\ref{fig:prob-change} shows the success probabilities when Eq. \eqref{eq:encode_ved} is 
used as the encoding function, in which case the lower bound is \(\Omega(\sqrt{n}/a) = \Omega(n^{-5.5})\). 
As shown in the figure, the actual success probability is much higher than the lower bound, indicating 
that the quantum algorithm for removing the index state will be more efficient than expected by 
the theory.

\begin{figurehere}
    \centering
    \includegraphics[width=\linewidth]{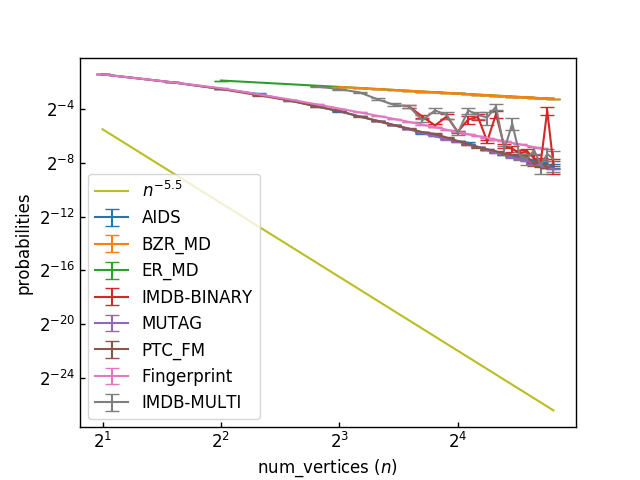}
    \caption{
    Success probability for removing the index state. 
    The horizontal axis represents the number of vertices $n$, and the vertical axis represents 
    the success probability. The error bar represents the standard error.}
     \label{fig:prob-change}
\end{figurehere}


\section*{Discussion}

In this paper, as a main result, we provided the condition and the protocol for removing 
the index state with polynomial query complexity. 
The encoding function $E(G,x)$ that extracts features from a subgraph $x$, given by 
Eq.~\eqref{eq:encode_ve} or \eqref{eq:encode_ved}, satisfies this condition, which allows 
us to construct the graph kernel that correctly reflects features of all subgraphs. 
We gave a proof-of-principle numerical demonstration to solve the problem of classifying 
various type of graph set containing graphs at most 28 vertices, via the quantum simulator 
composed of 41 or 56 qubits. 
The proposed algorithm that efficiently removes the index states will be useful in various 
other problems such as the task of counting the same words for text classification problems 
using the Bhattacharyya kernel \cite{BhattacharyyaKernelText}.

We here give a remark on the choice of $E(G,x)$. 
One would consider that $E(G,x)$ with more features including e.g. a cycle structure 
\cite{CyclesReview}, which thus has a bigger range of function, may lead to better classification 
performance, although it needs more query complexity for removing the index state and thereby 
constructing the kernel. 
(In particular, as is well known, if $E(G,x)$ and $x$ are one-to-one correspondence, we need 
an exponential order of query complexity to do this task.) 
However, the important fact revealed by the numerical demonstration is that such a bigger-range 
encoding may be not required; actually, we found that $E_{ve}(G,x)$ and $E_{ved}(G,x)$ lead to 
almost the same classification performance. 
This is presumably because the proposed kernel covers all subgraphs. 
Hence, a relatively simple $E(G,x)$ might be sufficient, which is the advantage of our quantum 
algorithm. 
In other words, existing kernels that do not cover all subgraphs may need to contain more features.

A disadvantage of the kernel method is that it needs heavy computational cost for calculating 
the inner products $\braket{G_i|G_j}$ for all pairs of $(G_i, G_j)$ contained in the training dataset, 
in order to construct the Gram matrix. 
The generalization of the switch test protocol given in Theorem~\ref{thm:rir-qf} may give a solution 
to this issue. 
That is, we could have an algorithm that generates a superposition 
$\ket{1}\ket{G_1}+\ket{2}\ket{G_2}+\ket{3}\ket{G_3}+\cdots$ 
and thereby efficiently construct the Gram matrix by some means. 
This direction is worth investigating, as it is the scheme demanded in the field of kernel-based 
quantum machine learning.

The algorithms posed in this paper are all difficult to implement on a near-term quantum device. 
A key approach may be to develop a valid relaxation method of the condition, because very precise 
computation of the kernel value may be not necessary.


\section*{Supplementary Information}

This supplementary information contains proofs of the theorems in the main text, properties of the 
SH kernel, some lemmas related for constructing $E(G,x)$, and an additional information for our 
numerical simulations.

\subsection*{Proof of theorems}

\begin{proof}[Proof of Theorem \ref{thm:rir-v-range}]
    Suppose that the set \(X_{v,h}\) satisfies
    \begin{align}
        &\bigcup_{h=1}^{|Y_v|}X_{v,h}=X_v, \\
        &\bigcup_{h=1}^{|Y_v|}\bigcup_{\substack{h'\neq h, \\ h'\in[1,|Y_v|]}}(X_{v,h}\cap X_{v,h'})=\emptyset.
    \end{align}
    We rewrite Eq. \eqref{eq:prob-to-init-ir} as
    \begin{align}
        \Pr(0^n)=\frac{1}{2^{2n}}\sum_{v=0}^n \sum_{h=1}^{|Y_v|} |X_{v,h}|^2.
    \end{align}
    To calculate the lower bound of
    \begin{equation}
        \sum_{h=1}^{|Y_v|} |X_{v,h}|^2
    \end{equation}
    subjected to the equality constraint
    \begin{align}
        \sum_{h=1}^{|Y_v|} |X_{v,h}|=|X_v|=\binom{n}{v},
    \end{align}
    we define the cost function 
    \begin{align}
        &L(|X_{v,1}|,\dots,|X_{v,|Y_v|}|,\lambda) \\
        &=-\sum_{h=1}^{|Y_v|} |X_{v,h}|^2-\lambda\left(\sum_{h=1}^{|Y_v|}|X_{v,h}|-\binom{n}{v}\right),
    \end{align}
    where \(\lambda\) denotes the Lagrange multiplier. Clearly,
    \begin{equation}
        |X_{v,h}|=\frac{1}{|Y_v|}\binom{n}{v}
    \end{equation}
    maximizes \(L(|X_{v,1}|,\dots,|X_{v,|Y_v|}|,\lambda)\). 
    Thus, the probability that we obtain \(0^{n}\) when measuring the index state is evaluated as 
    \begin{align}
        \Pr(0^n)
        &\geq \frac{1}{2^{2n}}\sum_{v=0}^n \sum_{h=1}^{|Y_v|}\left(\frac{1}{|Y_v|}\binom{n}{v}\right)^2 
        =\frac{1}{2^{2n}}\sum_{v=0}^n \frac{1}{|Y_v|} \binom{n}{v}^2 \\
        &\geq\frac{1}{2^{2n}}\sum_{v=0}^n \frac{1}{\beta n^c} \binom{n}{v}^2 
        =\frac{1}{\beta n^c 2^{2n}}\binom{2n}{n} \\
        &\sim \frac{1}{\beta n^c 2^{2n}}\frac{2^{2n}}{\sqrt{\pi n}} 
        =\Omega\left(\frac{\sqrt{n}}{n^c}\right), 
    \end{align}
    where we used Eq.~\eqref{eq:order of |Y|} to have $|Y_v|={\rm Pol}(v^{c-1})\leq \beta n^c$ with 
    some constant $\beta$. 
    Note that $\Omega(\cdot)$ is defined as $p(n)=\Omega(q(n))$ through two probability distributions 
    $p$ and $q$ satisfying 
    \begin{align}
        \exists n_0, \exists M>0\ \mathrm{s.t.}\ n\geq n_0 \Rightarrow p(n)\geq Mq(n),
    \end{align}
    for real valued functions $p(n)$ and $q(n)$. 
\end{proof}

\begin{proof}[Proof of Theorem \ref{thm:rir-qf}]
    According to Fig.~\ref{qc:init-index-reg-sp}, we perform the controlled-\(H^{\otimes n}\) followed 
    by the controlled-\(H^{\otimes n'}\) on the following state
    \begin{align}
        &\frac{1}{\sqrt{2}}\left(\frac{\ket{0}\sum_{x\in\{0,1\}^{n}}\ket{x}\ket{f(x)}}{\sqrt{2^{n}}} \right. \\
        \quad &+ \left. \frac{\ket{1}\ket{0^{n-n'}}\sum_{x\in\{0,1\}^{n'}}\ket{x}\ket{g(x)}}{\sqrt{2^{n'}}}\right) \\
        &=\frac{1}{\sqrt{2}}\left(\frac{\ket{0}\sum_{k=1}^{a}\sum_{x\in X_{k}}\ket{x}\ket{y_{k}}}{\sqrt{2^{n}}} \right. \\
        \quad &+ \left. \frac{\ket{1}\ket{0^{n-n'}}\sum_{k=1}^{a'}\sum_{x\in{X_{k}'}}\ket{x}\ket{y_{k}'}}{\sqrt{2^{n'}}}\right).
    \end{align}
    Then, we have
    \begin{align}
        &\frac{1}{\sqrt{2}}\left\{\ket{0}\left(\frac{1}{2^{n}}\ket{0^{n}}\sum_{k=1}^{a}|X_{k}|\ket{y_{k}}+\mathrm{{other\ terms}}\right)\right. \\
        &+ \left. \ket{1}\left(\frac{1}{2^{n'}}\ket{0^{n}}\sum_{k=1}^{a'}|X_{k}'|\ket{y_{k}'}+\mathrm{{other\ terms}'}\right)\right\} \label{eq:sp-has-other-terms}.
    \end{align}
    The probability that we obtain \(0^{n}\) via the index state measurement is given by 
    \begin{equation} \label{eq:sp-prob-lower-bound}
        \Pr(0^{n})=\frac{1}{2}\left(\frac{\sum_{k=1}^{a}|X_{k}|^2}{2^{2n}}+\frac{\sum_{k=1}^{a'}|X_{k}'|^2}{2^{2n'}}\right).
    \end{equation}
    Owing to Eq. \eqref{eq:prob-lower-bound}, this probability is lower bounded by 
    \begin{equation}
        \Pr(0^{n})\geq \frac{1}{2}\left(a^{-1}+a'^{-1}\right).
    \end{equation}
    Therefore, we can remove the index state with probability at least \((a^{-1}+a'^{-1})/2\).
\end{proof}

\begin{proof}[Proof of Theorem \ref{thm:classical-sample-size}]
    We use the result given in \cite{Weissman}; 
    for the empirical distribution of a sequence of independent identically distributed random variables, 
    $\hat{P}_{\mathbf{x}^S}$, and the true distribution $P$, the following inequality holds: 
    \begin{align} \label{eq:l1-deviation}
        \Pr\left(\left\|P-\hat{P}_{\mathbf{x}^S}\right\|_1\geq\epsilon\right)
          &\leq(2^a-2)e^{-S\varphi(\pi_P)\epsilon^2/4},
    \end{align}
    where \(\varphi(p)\) and \(\pi_P\) are given by 
    \begin{equation}
        \varphi(p)=\frac{1}{1-2p}\ln{\frac{1-p}{p}}
    \end{equation}
    and
    \begin{equation}
        \pi_P=\max_{k} \min \{P(y_k), 1-P(y_k)\} = \max_{k} P(y_k).
    \end{equation}
    Here we assumed $P(y_k) < 1/2$, which in fact holds in our case. 
    Note that $\delta$ in Eq.~\eqref{eq:cl-l1-bound} is defined by the rightmost side of 
    Eq.~\eqref{eq:l1-deviation}. 
    Now, because the true probability distribution $P$ is given by Eq.~\eqref{eq:true-prob-dist}, we obtain
    \begin{equation}
        \pi_P=\max_{k}\frac{|X_k|}{2^n}.
    \end{equation}
    Then, from Eq.~\eqref{eq:v_range_constraint}, we have the following inequality: 
    \begin{align}
        \frac{1}{(n/2)^{c-1}}\binom{n}{n/2}\frac{1}{2^n} \leq \pi_P \leq \binom{n}{n/2}\frac{1}{2^n}, 
    \end{align}
    which implies 
    \begin{align}
        \frac{2^{c-1/2}}{\sqrt{\pi}n^{c-1/2}} \lesssim \pi_P \lesssim \frac{\sqrt{2}}{\sqrt{\pi n}}.
    \end{align}
    Thus, $\varphi(\pi_P)$ is of the order of $\Omega(\ln{n})$, and then $\delta$ can be evaluated 
    as $\delta \sim 2^a {\rm exp}(-S\epsilon^2 \Omega(\ln{n}))$, from Eq.~\eqref{eq:l1-deviation}. 
    As a result, we find 
    \begin{align}
        S=\frac{a\ln{2}-\ln{\delta}}{\epsilon^2 \Omega(\ln{n})} 
        =O\left(\frac{a-\log{\delta}}{\epsilon^2\log{n}}\right) 
        =O\left(\frac{a}{\log{n}}\right).
    \end{align}
    Therefore we arrive at \(S=O(a/\log{n})\).
\end{proof}

\subsection*{Properties of the SH kernel}

\subsubsection*{Positive semidefiniteness}

First, we prove that the SH kernel  is positive semidefinite, which is necessary to construct a 
valid classifier based on the SH kernel.

\begin{lemma}\label{lem:semidef}
    Let $\{ G_x, ~ x=1, \ldots, N\}$ be a set of graphs. 
    Then, the SH kernel \eqref{eq:kernel_def_SH}: 
    \begin{align}
        K_{\rm SH}(G, G')=k_{\rm SH}(G, G') f_{G}^\top f_{G'}, 
    \end{align}
    where $f_{G}=[|X_1|, \ldots, |X_a| ]^\top$ and
    \begin{align}
       k_{\rm SH}(G, G')
        =\frac{2}{\frac{2^{n'}}{2^n}\sum_{k=1}^a |X_k|^2 + \frac{2^n}{2^{n'}}\sum_{k=1}^a |X_k'|^2}, 
    \end{align}
    is a positive semidefinite kernel; 
    that is, the matrix $(K_{\rm SH}(G_x, G_y))_{x,y=1,\ldots,N}$ is a positive semidefinite matrix.  
\end{lemma}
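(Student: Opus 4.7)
The plan is to recognize that the seemingly complicated prefactor $k_{\rm SH}$ conspires with $f_G^\top f_{G'}$ to produce a very clean expression, and then to factor that expression as a product of two positive semidefinite (PSD) kernels. First I would introduce the unnormalized real feature vector
\begin{equation}
\phi_G := \frac{1}{2^n}\sum_{k=1}^{a} |X_k|\,\ket{y_k},
\end{equation}
living in a common basis for all graphs, so that $\|\phi_G\|^2 = 2^{-2n}\sum_k |X_k|^2$ and $\braket{\phi_G|\phi_{G'}} = 2^{-(n+n')} f_G^\top f_{G'}$. Substituting these two identities into the definition of $K_{\rm SH}$ collapses the $2^{n},2^{n'}$ bookkeeping and leaves the compact form
\begin{equation}
K_{\rm SH}(G,G') \;=\; \frac{2\,\braket{\phi_G|\phi_{G'}}}{\|\phi_G\|^2+\|\phi_{G'}\|^2}.
\end{equation}

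Next I would write $K_{\rm SH} = K_1 \cdot K_2$ pointwise, where $K_1(G,G'):=\braket{\phi_G|\phi_{G'}}$ and $K_2(G,G'):=2/(\|\phi_G\|^2+\|\phi_{G'}\|^2)$, and show each factor is PSD. $K_1$ is PSD because it is literally an inner product of real vectors in a shared Hilbert space (so for any coefficients $c_x\in\mathbb{R}$ one has $\sum_{x,y} c_x c_y K_1(G_x,G_y)=\|\sum_x c_x \phi_{G_x}\|^2\geq 0$). For $K_2$, the key trick is the integral identity
\begin{equation}
\frac{1}{\alpha+\beta} \;=\; \int_0^{\infty} e^{-t\alpha}\,e^{-t\beta}\,dt \qquad (\alpha,\beta>0),
\end{equation}
applied with $\alpha=\|\phi_G\|^2$ and $\beta=\|\phi_{G'}\|^2$. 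For every fixed $t>0$, the pointwise expression $e^{-t\|\phi_G\|^2}e^{-t\|\phi_{G'}\|^2}=h_t(G)h_t(G')$ with $h_t(G):=e^{-t\|\phi_G\|^2}$ is a rank-one PSD kernel, and the nonnegative integral of PSD kernels is PSD; hence $K_2$ is PSD. I would then invoke the Schur product theorem (the entrywise product of PSD matrices is PSD) on the Gram matrices of $K_1$ and $K_2$ to conclude that the Gram matrix of $K_{\rm SH}$ is PSD.

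The main thing that needs to be handled carefully, rather than being a deep obstacle, is the asymmetric role of $n\geq n'$ in the quantum circuit of Theorem~\ref{thm:rir-qf}: when we assemble a Gram matrix over $\{G_1,\dots,G_N\}$ with possibly different vertex counts $n_x$, one must check that the closed-form $K_{\rm SH}(G_x,G_y)=2\braket{\phi_{G_x}|\phi_{G_y}}/(\|\phi_{G_x}\|^2+\|\phi_{G_y}\|^2)$ is manifestly symmetric in $(G_x,G_y)$ regardless of which graph plays the role of the larger one, so that the argument above gives a bona fide symmetric PSD matrix. After that verification, all steps are algebra or standard kernel facts, and no further surprises are expected.
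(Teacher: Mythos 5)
Your proof is correct, but it proves the key fact --- that the prefactor $k_{\rm SH}$ is itself a positive semidefinite kernel --- by a genuinely different route from the paper. The paper also reduces the problem, via the Schur product theorem, to showing that the matrix $\bigl(k_{\rm SH}(G_x,G_y)\bigr)_{x,y}$ is PSD, but it then exhibits an explicit lower-triangular matrix $L$ (Eq.~\eqref{eq:cholesky-L}) and verifies $k_{\rm SH}=2LL^\top$ entry by entry through a telescoping induction on the partial sums $\alpha(t),\beta(t),\rho(t),\sigma(t)$; this is a constructive Cholesky factorization, but it is several pages of algebra and requires the identity $q_{x,y}=q_{t,y}/q_{t,x}$ at a crucial step. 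You instead absorb the powers of two into the rescaled feature vector $\phi_G=2^{-n}\sum_k|X_k|\ket{y_k}$, observe that $K_{\rm SH}(G,G')=2\braket{\phi_G|\phi_{G'}}/(\|\phi_G\|^2+\|\phi_{G'}\|^2)$, and then dispose of the denominator with the classical integral representation $1/(\alpha+\beta)=\int_0^\infty e^{-t\alpha}e^{-t\beta}\,dt$, writing the prefactor as a nonnegative integral of rank-one kernels $h_t(G)h_t(G')$. This is the standard proof that $1/(x+y)$ is a PSD kernel on $(0,\infty)$, and it is shorter, less error-prone, and makes the structure of the kernel (a Cauchy--Stieltjes-type normalization of an inner product) transparent; the only hypotheses you should state explicitly are that $\|\phi_G\|^2>0$ for every graph (true, since $\sum_k|X_k|=2^n$ forces some $|X_k|>0$), which guarantees both the validity of the integral identity and the convergence of $\int_0^\infty\bigl(\sum_x c_x h_t(G_x)\bigr)^2dt$. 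What the paper's approach buys in exchange for its length is an explicit finite factorization $L$ with no appeal to integrals; what yours buys is brevity and a reusable general principle. Your closing remark about symmetry is well taken and easily checked: the closed form is manifestly symmetric in $(G_x,G_y)$, so the $n\geq n'$ convention of the circuit plays no role in the Gram matrix.
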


\begin{proof}
    We use the following general fact; if $\kappa_1$ and $\kappa_2$ are positive semidefinite kernels, 
    then the product $\kappa(G,G')=\kappa_1(G,G')\kappa_2(G,G')$ is also a positive semidefinite kernel. 
    Now, because \(f_{G}^\top f_{G'}\) is positive semidefinite, we prove 
    that $k_{\rm SH}(G, G')$ is positive semidefinite. 
    For this purpose, we define
    \begin{align}
        p_x = \sum_{k=1}^a |X_{kx}|^2, ~~
        q_{x,y} = \frac{2^{n_y}}{2^{n_x}}.
    \end{align}
    Then,
    \begin{align}
        k_{\rm SH}(G_x, G_y)=\frac{2}{q_{x,y}p_x+q_{y,x}p_y}.
    \end{align}
    Also we define
    \begin{align} \label{eq:cholesky-L}
        L_{i,j}=\begin{cases}
            \frac{\prod_{k=1}^{j-1}\left(q_{k,j}p_k-q_{j,k}p_j\right)}
                   {\sqrt{2p_j}\prod_{k=1}^{j-1}\left(q_{k,j}p_k+q_{j,k}p_j\right)} & (i=j) \\
            \frac{\sqrt{2p_j}\prod_{k=1}^{j-1}\left(q_{k,i}p_k-q_{i,k}p_i\right)}
                   {\prod_{k=1}^j\left(q_{k,i}p_k+q_{i,k}p_i\right)} & (i>j) \\
            0 & (i<j).
        \end{cases}
    \end{align}    
    Below we will show that the matrix $(k_{\rm SH}(G_x, G_y))_{x,y=1,\ldots,N}$ is represented as 
    \begin{align} \label{eq:cholesky}
        k_{\rm SH} = 2 LL^\top,
    \end{align}
    meaning that $k_{\rm SH}(G, G')$ is positive semidefinite. 
    The proof is divided into the three cases (i), (ii), and (iii). 
    \\
    (i) The case \(x=y\). 
    \begin{align}
        \left(LL^\top\right)_{x,x}&=\sum_{l=1}^x L_{x,l}^2 \\
        &=\sum_{l=1}^{x-1}\frac{2p_l\prod_{k=1}^{l-1}\left(q_{k,x}p_k-\frac{1}{q_{k,x}}p_x\right)^2}{\prod_{k=1}^l\left(q_{k,x}p_k+\frac{1}{q_{k,x}}p_x\right)^2} \\
        &+\frac{\prod_{k=1}^{x-1}\left(q_{k,x}p_k-\frac{1}{q_{k,x}}p_x\right)^2}{2p_x\prod_{k=1}^{x-1}\left(q_{k,x}p_k+\frac{1}{q_{k,x}}p_x\right)^2}.
    \end{align}
    Here, we define \(\alpha(t)\), \(\beta(t)\) (\(t\in[0,x-1]\)) as
    \begin{align}
        \alpha(t)&=\sum_{l=1}^{t}\frac{2p_l\prod_{k=1}^{l-1}\left(q_{k,x}p_k-\frac{1}{q_{k,x}}p_x\right)^2}{\prod_{k=1}^l\left(q_{k,x}p_k+\frac{1}{q_{k,x}}p_x\right)^2}, \\
        \beta(t)&=\frac{\prod_{k=1}^{t}\left(q_{k,x}p_k-\frac{1}{q_{k,x}}p_x\right)^2}{\prod_{k=1}^{t}\left(q_{k,x}p_k+\frac{1}{q_{k,x}}p_x\right)^2}.
    \end{align}
    Then we have
    \begin{align}
        \left(LL^\top\right)_{x,x} = \alpha(x-1)+\frac{\beta(x-1)}{2p_x}.
    \end{align}
    When \(t\geq 1\),
    \begin{align}
        &\alpha(t)+\frac{\beta(t)}{2p_x} \\
        &=\alpha(t-1)
              +\frac{2p_t\prod_{k=1}^{t-1}\left(q_{k,x}p_k-\frac{1}{q_{k,x}}p_x\right)^2}
                       {\prod_{k=1}^t\left(q_{k,x}p_k+\frac{1}{q_{k,x}}p_x\right)^2} +\frac{\beta(t)}{2p_x} \\
        &=\alpha(t-1)+\left(4p_t p_x + \left(q_{t,x}p_t-\frac{1}{q_{t,x}}p_x\right)^2\right) \\
        &\hspace{2em} 
            \times\frac{\prod_{k=1}^{t-1}\left(q_{k,x}p_k-\frac{1}{q_{k,x}}p_x\right)^2}
                             {2p_x\prod_{k=1}^t\left(q_{k,x}p_k+\frac{1}{q_{k,x}}p_x\right)^2} \\
        &=\alpha(t-1)+\frac{\prod_{k=1}^{t-1}\left(q_{k,x}p_k-\frac{1}{q_{k,x}}p_x\right)^2}{2p_x\prod_{k=1}^{t-1}\left(q_{k,x}p_k+\frac{1}{q_{k,x}}p_x\right)^2} \\
        &=\alpha(t-1)+\frac{\beta(t-1)}{2p_x}.
    \end{align}
    Thus, we obtain the following equation
    \begin{align}
        \left(LL^\top\right)_{x,x}=\alpha(0)+\frac{\beta(0)}{2p_x}=\frac{1}{2p_x}
               =k_{\rm SH}(G_x, G_x)/2.
    \end{align}
    (ii) The case \(x<y\). 
    \begin{align}
        \left(LL^\top\right)_{x,y}&=\sum_{l=1}^x L_{x,l}L_{y,l} \\
        &=\sum_{l=1}^{x-1}\frac{\sqrt{2p_l}\prod_{k=1}^{l-1}\left(q_{k,x}p_k-\frac{1}{q_{k,x}}p_x\right)}{\prod_{k=1}^l\left(q_{k,x}p_k+\frac{1}{q_{k,x}}p_x\right)} \\
        &\hspace{2em}\times\frac{\sqrt{2p_l}\prod_{k=1}^{l-1}\left(q_{k,y}p_k-\frac{1}{q_{k,y}}p_y\right)}{\prod_{k=1}^l\left(q_{k,y}p_k+\frac{1}{q_{k,y}}p_y\right)} \\
        &+\frac{\prod_{k=1}^{x-1}\left(q_{k,x}p_k-\frac{1}{q_{k,x}}p_x\right)}{\sqrt{2p_x}\prod_{k=1}^{x-1}\left(q_{k,x}p_k+\frac{1}{q_{k,x}}p_x\right)} \\
        &\hspace{2em}\times\frac{\sqrt{2p_x}\prod_{k=1}^{x-1}\left(q_{k,y}p_k-\frac{1}{q_{k,y}}p_y\right)}{\prod_{k=1}^x\left(q_{k,y}p_k+\frac{1}{q_{k,y}}p_y\right)}.
    \end{align}
    Here, we define \(\rho(t)\), \(\sigma(t)\) (\(t\in[0,x-1]\)) as
    \begin{align}
        \rho(t)&=\sum_{l=1}^{t}\frac{2p_l\prod_{k=1}^{l-1}\left(q_{k,x}p_k-\frac{1}{q_{k,x}}p_x\right)}{\prod_{k=1}^l\left(q_{k,x}p_k+\frac{1}{q_{k,x}}p_x\right)} \\
        &\times\frac{\prod_{k=1}^{l-1}\left(q_{k,y}p_k-\frac{1}{q_{k,y}}p_y\right)}{\prod_{k=1}^l\left(q_{k,y}p_k+\frac{1}{q_{k,y}}p_y\right)}, \\
        \sigma(t)&=\frac{\prod_{k=1}^{t}\left(q_{k,x}p_k-\frac{1}{q_{k,x}}p_x\right)}{\prod_{k=1}^{t}\left(q_{k,x}p_k+\frac{1}{q_{k,x}}p_x\right)} \\
        &\times\frac{\prod_{k=1}^{t}\left(q_{k,y}p_k-\frac{1}{q_{k,y}}p_y\right)}{\prod_{k=1}^{t}\left(q_{k,y}p_k+\frac{1}{q_{k,y}}p_y\right)}.
    \end{align}
    Then we have 
    \begin{align}
        \left(LL^\top\right)_{x,y} = \rho(x-1)+\frac{\sigma(x-1)}{q_{x,y}p_x+\frac{1}{q_{x,y}}p_y}.
    \end{align}
    When \(t\geq 1\),
    \begin{align}
        &\rho(t)+\frac{\sigma(t)}{q_{x,y}p_x+\frac{1}{q_{x,y}}p_y} \\
        &=\rho(t-1)+\frac{2p_t\prod_{k=1}^{t-1}\left(q_{k,x}p_k-\frac{1}{q_{k,x}}p_x\right)}{\prod_{k=1}^t\left(q_{k,x}p_k+\frac{1}{q_{k,x}}p_x\right)} \\
        &\hspace{2em}\times\frac{\prod_{k=1}^{t-1}\left(q_{k,y}p_k-\frac{1}{q_{k,y}}p_y\right)}{\prod_{k=1}^t\left(q_{k,y}p_k+\frac{1}{q_{k,y}}p_y\right)}+\frac{\sigma(t)}{q_{x,y}p_x+\frac{1}{q_{x,y}}p_y} \\
        &=\rho(t-1)+\left(2\left(q_{x,y}p_x+\frac{1}{q_{x,y}}p_y\right)p_t+q_{t,x}q_{t,y}p_t^2 \right.\\
        &\hspace{2em}-\left.\left(\frac{q_{t,y}}{q_{t,x}}p_x+\frac{q_{t,x}}{q_{t,y}}p_y\right)p_t+\frac{1}{q_{t,x}q_{t,y}}p_x p_y\right) \\
        &\hspace{2em}\times\frac{1}{\left(q_{t,x}p_t+\frac{1}{q_{t,x}}p_x\right)\left(q_{t,y}p_t+\frac{1}{q_{t,y}}p_y\right)} \\
        &\hspace{2em}\times\frac{\sigma(t-1)}{\left(q_{x,y}p_x+\frac{1}{q_{x,y}}p_y\right)}.
    \end{align}
    Here,
    \begin{align}
        q_{x,y}=\frac{2^{n_y}}{2^{n_x}}=\frac{2^{n_y}/2^{n_t}}{2^{n_x}/2^{n_t}}=\frac{q_{t,y}}{q_{t,x}}
    \end{align}
    holds, and thus we have 
    \begin{align}
        \rho(t)+\frac{\sigma(t)}{q_{x,y}p_x+\frac{1}{q_{x,y}}p_y} 
        =\rho(t-1)+\frac{\sigma(t-1)}{q_{x,y}p_x+\frac{1}{q_{x,y}}p_y}.
    \end{align}
    As a result, we obtain the following equation:
    \begin{align}
        \left(LL^\top\right)_{x,y} &= \rho(0)+\frac{\sigma(0)}{q_{x,y}p_x+\frac{1}{q_{x,y}}p_y} \\
        &=\frac{1}{q_{x,y}p_x+\frac{1}{q_{x,y}}p_y}=k_{\rm SH}(G_x, G_y)/2. 
    \end{align}
    (iii) The case \(x>y\). 
    The result directly follows by exchanging \(x\) and \(y\) in the discussion given in the case (ii); 
    \begin{align}
        \left(LL^\top\right)_{x,y}
        &=\sum_{l=1}^y L_{x,l}L_{y,l}
          =\frac{1}{q_{y,x}p_y+\frac{1}{q_{y,x}}p_x} \\
        &=\frac{1}{q_{x,y}p_x+\frac{1}{q_{x,y}}p_y}=k_{\rm SH}(G_x, G_y)/2.
    \end{align}
    Summarizing, we have Eq.~\eqref{eq:cholesky}, meaning that the kernel 
    $K_{\rm SH}(G, G')$ is positive semidefinite.
\end{proof}

\subsubsection*{Relationship with the Bhattacharyya kernel}

In this paper, we proposed two kernels: the Bhattacharyya (BH) kernel \eqref{eq:kernel_def_nff} 
arising in the case of swap test and the SH kernel \eqref{eq:kernel_def_SH} arising in the case 
of switch test. 
The following relationship holds:

\begin{lemma} \label{thm:cosinesimilarity-geq-ourkernel}
    The BH kernel \eqref{eq:kernel_def_nff} is bigger than or equal to the SH kernel  
    \eqref{eq:kernel_def_SH}. 
\end{lemma}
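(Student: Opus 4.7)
The plan is to reduce the lemma to a single application of the AM--GM inequality after clearing out the common non-negative factor that appears in both kernels. Concretely, both $K_{\rm BH}(G,G')$ and $K_{\rm SH}(G,G')$ share the factor $f_G^\top f_{G'}$, and since every coordinate of $f_G=[|X_1|,\ldots,|X_a|]^\top$ is a non-negative cardinality, this inner product is non-negative. Therefore it suffices to compare the two scalar coefficients, and I would prove the sharper statement $k_{\rm BH}(G,G') \geq k_{\rm SH}(G,G')$.

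Next, I would introduce the abbreviations
\begin{align}
A = \frac{2^{n'}}{2^{n}}\sum_{k=1}^{a}|X_k|^2, \qquad
B = \frac{2^{n}}{2^{n'}}\sum_{k=1}^{a}|X_k'|^2,
\end{align}
so that $k_{\rm SH}(G,G') = 2/(A+B)$, while $k_{\rm BH}(G,G') = 1/\sqrt{(\sum_k|X_k|^2)(\sum_k|X_k'|^2)}$. The key observation is that the two scaling factors $2^{n'}/2^n$ and $2^{n}/2^{n'}$ are exact reciprocals, hence
\begin{align}
AB = \Bigl(\sum_{k=1}^{a}|X_k|^2\Bigr)\Bigl(\sum_{k=1}^{a}|X_k'|^2\Bigr),
\end{align}
so that $k_{\rm BH}(G,G') = 1/\sqrt{AB}$.

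The desired inequality $k_{\rm BH} \geq k_{\rm SH}$ then becomes
\begin{align}
\frac{1}{\sqrt{AB}} \geq \frac{2}{A+B},
\end{align}
which rearranges to $A+B \geq 2\sqrt{AB}$, i.e.\ the standard AM--GM inequality applied to the non-negative quantities $A$ and $B$. Multiplying by the non-negative factor $f_G^\top f_{G'}$ yields $K_{\rm BH}(G,G') \geq K_{\rm SH}(G,G')$, completing the proof.

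There is no serious obstacle here; the only thing one must notice is the cancellation $AB = (\sum_k|X_k|^2)(\sum_k|X_k'|^2)$, which is precisely what makes the switch-test denominator an arithmetic mean with the same geometric mean as the swap-test denominator. Equality holds exactly when $A=B$, i.e.\ when $2^{2n'}\sum_k|X_k|^2 = 2^{2n}\sum_k|X_k'|^2$, which is worth recording as a remark but not strictly needed for the statement.
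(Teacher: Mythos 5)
Your proposal is correct and follows essentially the same route as the paper: both reduce the claim to the AM--GM inequality applied to the two terms $\frac{2^{n'}}{2^{n}}\sum_k|X_k|^2$ and $\frac{2^{n}}{2^{n'}}\sum_k|X_k'|^2$, whose product collapses to $\bigl(\sum_k|X_k|^2\bigr)\bigl(\sum_k|X_k'|^2\bigr)$ because the scaling factors are reciprocals. Your explicit remark that $f_G^\top f_{G'}\geq 0$ is a small point the paper leaves implicit, but the argument is otherwise identical.
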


\begin{proof}
    Note that $f_{G}^\top f_{G'}$ is common in both kernels. 
    From the inequality of the arithmetic and geometric means, we obtain the following inequality:
    \begin{align}
        K_{\rm SH}(G, G') 
        &=\frac{2}{\frac{2^{n'}}{2^{n}}\sum_{k=1}^a |X_{k}|^2+\frac{2^{n}}{2^{n'}}\sum_{k=1}^a |X_{k}'|^2}
               f_{G}^\top f_{G'} \\
        & \hspace{-4em}
        \leq\frac{1}{\sqrt{\frac{2^{n'}}{2^{n}}\sum_{k=1}^a |X_{k}|^2}
                                    \sqrt{\frac{2^{n}}{2^{n'}}\sum_{k=1}^a |X_{k}'|^2}}
              f_{G}^\top f_{G'} \\
        & \hspace{-4em}
        =\frac{1}{\sqrt{\sum_{k=1}^a |X_{k}|^2}\sqrt{\sum_{k=1}^a |X_{k}'|^2}}
              f_{G}^\top f_{G'} = K_{\rm BH}(G, G').
    \end{align}
\end{proof}

This result means that, as a similarity measure, the SH kernel is more conservative than the BH kernel 
(note that both kernels are upper bounded by 1, because they originate from the inner product 
$\braket{G|G'}$). 
Figure~\ref{fig:cosinesimilarity-ourkernel} depicts the relation between these kernels for the case of 
MUTAG training dataset; this result implies that, in general, the difference between the kernel values 
would be tiny, but as mentioned above, there will be a solid difference in the prediction classification 
accuracy in view of the conservativity of the corresponding classifiers. 
\\

\begin{figurehere}
    \includegraphics[width=0.7\linewidth]{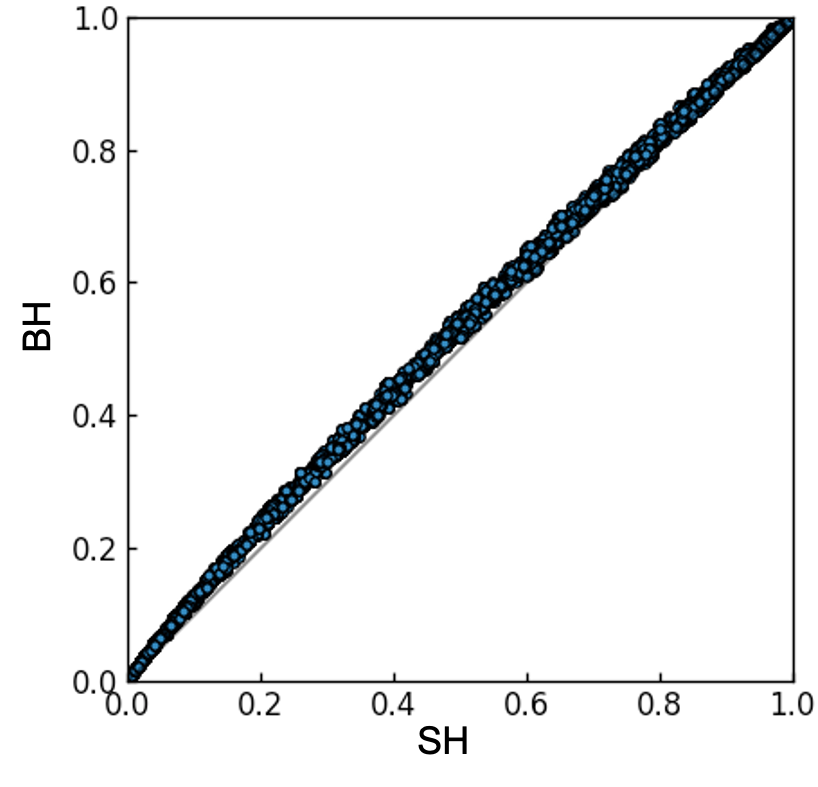}
    \caption{Relationship between BH (vertial axis) and SH (horizontal axis), for the case of 
    MUTAG dataset.}
    \label{fig:cosinesimilarity-ourkernel}
\end{figurehere}
\mbox{}

\subsection*{Time complexity for constructing $E(G,x)$}

\begin{lemma} \label{thm:adder}
    The time complexity of adding 1 to the state $\ket{k} (0\leq k \leq A)$ is $O((\log A)^2)$. 
\end{lemma}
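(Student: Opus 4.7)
The plan is to implement the increment $\ket{k} \to \ket{k+1}$ as a standard reversible binary incrementer acting on the $n := \lceil \log_{2}(A+1) \rceil = O(\log A)$ qubits that encode $k$ in binary, and then to count the elementary one- and two-qubit gates that such a circuit requires. So the task reduces to (i) writing down a correct circuit, and (ii) bounding its gate count.

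First, I would describe the circuit explicitly. Writing $k = \sum_{i=0}^{n-1} b_i 2^i$ with $b_i\in\{0,1\}$, adding $1$ in binary flips the lowest bit and propagates a carry through the leading block of $1$s. The corresponding reversible implementation is: for $i = n-1, n-2, \ldots, 1, 0$, apply an $X$ on qubit $i$ controlled on qubits $0,1,\ldots,i-1$ all being $\ket{1}$. Sweeping from the top bit downwards is essential, because at the moment we act on qubit $i$ the lower qubits still hold their \emph{pre-increment} values, so the $i$-fold control correctly detects whether a carry reaches position $i$. I would verify this on a general computational basis state in one line.

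Second, I would cost each gate. The gate applied to qubit $i$ is an $i$-controlled NOT. Using the standard Barenco et al.\ construction, a $k$-controlled NOT decomposes into $O(k)$ elementary one- and two-qubit gates with the help of one clean ancilla that is restored to $\ket{0}$; such ancillas are already assumed to be freely available in the oracle constructions used elsewhere in this paper. Summing over $i$ gives
\begin{equation}
\sum_{i=0}^{n-1} O(i) \;=\; O(n^{2}) \;=\; O((\log A)^{2}),
\end{equation}
which is the claimed bound.

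The only subtle point, which I view as the main potential obstacle, is the choice of multi-controlled NOT decomposition: an ancilla-free decomposition of a $k$-controlled NOT costs $\Theta(k^{2})$ elementary gates, which would yield $O((\log A)^{3})$ rather than $O((\log A)^{2})$. The lemma as stated therefore implicitly uses the ancilla-assisted decomposition; this is consistent with the register conventions used by the oracles in Figures~\ref{qc:init-index-reg}--\ref{qc:swap-switch-test-grover}, so no additional assumption is introduced. Everything else in the argument is a routine correctness check and a geometric-sum bookkeeping.
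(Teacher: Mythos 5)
Your proof is correct and follows essentially the same route as the paper's: implement the increment as a cascade of $i$-controlled NOT gates, cost each one at $O(i)$ elementary gates via the standard ancilla-assisted decomposition, and sum to $O((\log A)^2)$. Your added remarks on the required gate ordering (high bit to low bit) and on the ancilla dependence of the linear-cost multi-controlled-NOT decomposition are accurate refinements of details the paper leaves implicit, but they do not change the argument.
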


\begin{proof}
We use the multi-controlled NOT gate $C^{i-1} X$, which operates the NOT gate $X$ on the 
\(i^{\text{th}}\) target qubit, if the value of the control qubits (i.e., the $0^{\text{th}}, 1^{\text{st}}, 
\ldots, (i-1)^{\text{th}}$ qubits) are all 1. 
The adder of 1 to $k$ can be done by repeatedly applying $C^{i-1} X$ for $i=1,\ldots,\log k$ 
on the state $\ket{k}$. 
Now $C^{i-1} X$ is composed of $O(i)$ Toffoli gates~\cite{nielsen}; 
in other words, the time complexity for operating $C^{i-1} X$ is $O(i)$. 
Hence, the maximum of the total time complexity of adding 1 to the state $\ket{k}$ is
    \begin{equation}
        \sum_{i=1}^{\log A} O(i) = O((\log A)^2).
    \end{equation}
\end{proof}

\begin{lemma} \label{thm:v}
The time complexity for preparing the feature state $\ket{\#v}$ of a subgraph $x$ is 
$O(|V|(\log |V|)^2)$.
\end{lemma}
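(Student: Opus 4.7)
The plan is to prepare the feature state $\ket{\#v}$ by maintaining a dedicated counter register that is incremented once for each vertex present in the subgraph. Since the subgraph index $x\in\{0,1\}^n$ with $n=|V|$ marks each vertex by a single qubit, the number of vertices $\#v$ equals the Hamming weight of $x$, and this weight ranges over $\{0,1,\dots,|V|\}$. Accordingly I would allocate a counter register of $\lceil\log(|V|+1)\rceil$ qubits initialised to $\ket{0}$.

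Next I would build the arithmetic as a sum of $|V|$ controlled increments: for each $i=1,\dots,|V|$, apply the gate that adds $1$ to the counter conditioned on the $i^{\text{th}}$ index qubit being $\ket{1}$. After processing all index qubits, the counter holds exactly $\#v$ in the computational basis, which is the desired feature state $\ket{\#v}$. Because the oracle is to act coherently on the full superposition over $x\in\{0,1\}^n$, it is essential that each increment is a genuine controlled-unitary on the counter, so that the value written is the Hamming weight of the particular $x$ in each branch of the superposition.

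For the cost, each controlled increment is an adder of $1$ to a register of size at most $\lceil\log(|V|+1)\rceil$, gated by one extra control qubit. By Lemma \ref{thm:adder} the cost of adding $1$ to $\ket{k}$ with $k\le |V|$ is $O((\log|V|)^2)$; promoting each involved multi-controlled NOT to use one additional control increases each gate's cost only by a constant factor, so the controlled-increment still runs in time $O((\log|V|)^2)$. Summing over the $|V|$ rounds gives
\begin{equation}
|V|\cdot O((\log|V|)^2)=O(|V|(\log|V|)^2),
\end{equation}
which is the claimed bound.

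The only delicate point, and the main thing to verify carefully, is that attaching one extra control to the $C^{i-1}X$ gates used inside the adder of Lemma \ref{thm:adder} does not inflate the asymptotic cost; this follows from the standard fact that $C^{i}X$ can still be decomposed into $O(i)$ Toffoli gates, so the cost analysis of Lemma \ref{thm:adder} carries over to the controlled adder without change. Everything else is routine bookkeeping.
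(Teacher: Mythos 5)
Your proposal is correct and follows essentially the same route as the paper's proof: realize the oracle as $|V|$ adders of $1$ on the counter register, each controlled by one index qubit, and invoke Lemma~\ref{thm:adder} to bound each controlled adder by $O((\log|V|)^2)$, giving $O(|V|(\log|V|)^2)$ in total. Your added remark that attaching the extra control to each $C^{i-1}X$ only changes the gate count by a constant factor is a detail the paper leaves implicit, but it does not alter the argument.
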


\begin{proof}
The oracle is realized as the set of adder of 1 controlled by the index state $\ket{x}$; that is, 
if the $i$th index qubit is $\ket{1}$ (meaning that the $i$th vertex is contained in the subgraph $x$), 
then the corresponding controlled adder adds 1 on the feature state. 
The index runs from $i=0$ to $i=n=|V|$, and hence there are totally $n$ controlled adders. 
Also each controlled adder needs $O((\log n)^2)$ time complexity due to Lemma 3. 
Therefore, the total time complexity is $n\times O((\log n)^2) = O(n(\log n)^2)$.
\end{proof}

\begin{lemma} \label{thm:e}
The time complexity for preparing the feature state $\ket{\#e}$ of a subgraph $x$ is $O(|E|(\log |E|)^2)$. 
\end{lemma}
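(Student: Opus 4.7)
The plan is to mirror the construction used for Lemma \ref{thm:v}, but replace the per-vertex controlled increment by a per-edge controlled increment. For each edge $e=(u,v)\in E$, I would introduce an adder of $1$ acting on the $\#e$ feature register, conditioned on the two index qubits corresponding to the endpoints $u$ and $v$ both being in state $\ket{1}$. This correctly increments $\#e$ by one exactly when both endpoints of $e$ are selected in the subgraph $x$, so after iterating over all edges the register stores the number of edges of the induced subgraph.

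Since $\#e$ takes values in $\{0,1,\dots,|E|\}$, the feature register uses $\lceil\log|E|\rceil$ qubits. By Lemma \ref{thm:adder}, a single plain adder of $1$ on such a register costs $O((\log|E|)^2)$ elementary gates, because it is built from the chain $C^{i-1}X$ for $i=1,\dots,\lceil\log|E|\rceil$ and each $C^{i-1}X$ decomposes into $O(i)$ Toffoli gates. The edge-controlled version simply appends the two endpoint qubits to the control set of every multi-controlled NOT in this chain, turning each $C^{i-1}X$ into $C^{i+1}X$; this only shifts the control count by an additive constant, so the Toffoli cost per gate remains $O(i)$ and the per-edge cost remains $O((\log|E|)^2)$.

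Summing over all $|E|$ edges then yields a total time complexity of
\begin{equation}
|E|\cdot O((\log|E|)^2) = O(|E|(\log|E|)^2),
\end{equation}
which matches the statement of the lemma.

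The only mildly delicate point is verifying that absorbing two extra control qubits into each $C^{i-1}X$ does not inflate the asymptotic cost, but this follows from the standard decomposition of multi-controlled NOT gates cited in the proof of Lemma \ref{thm:adder}; no new ingredient beyond the vertex-counting argument is required, and uncomputation of any ancillas used to implement the double-control can likewise be performed at the same asymptotic cost.
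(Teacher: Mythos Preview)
Your proposal is correct and follows essentially the same approach as the paper: for each edge, apply an adder of $1$ on the $\#e$ register controlled by the two index qubits of its endpoints, invoke Lemma~\ref{thm:adder} for the $O((\log|E|)^2)$ per-edge cost, and sum over the $|E|$ edges. Your additional remark that the two extra control qubits only shift each $C^{i-1}X$ to $C^{i+1}X$ without changing the asymptotic Toffoli count is a helpful elaboration, but the overall argument is the same as the paper's.
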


\begin{proof}
The idea is the same as that of Lemma 4, except that the adder is controlled by the pair of index qubits 
representing an edge of the subgraph $x$. 
Because there are $|E|$ such pairs, the total time complexity is 
$|E|\times O((\log |E|)^2) = O(|E|(\log |E|)^2)$. 
\end{proof}


\begin{lemma} \label{thm:dD}
The time complexity for preparing the feature state $\ket{\#dD}$ (the number of vertices 
with degree $D$, where $D=1, 2, 3$) of a subgraph $x$ is $O(n((\log n)^2+d(\log d)^2))$. 
\end{lemma}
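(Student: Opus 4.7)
The plan is to process each of the $n$ vertices of $G$ in sequence: for each vertex $v_i$, conditioned on the $i$th index qubit of $\ket{x}$ being $\ket{1}$ (so that $v_i$ belongs to the chosen subgraph), we compute the induced degree of $v_i$ into a small ancilla register, test whether this degree equals the constant $D\in\{1,2,3\}$, and if so we increment the feature register $\ket{\#dD}$. After the test we uncompute the degree ancilla so that it can be reused on the next vertex. This mirrors the strategy in Lemmas \ref{thm:v} and \ref{thm:e}, but with an extra inner loop that does the local degree count.

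To compute the induced degree of $v_i$ inside $x$, I would iterate over the at most $d$ neighbors $v_j$ of $v_i$, which are known classically from the adjacency list of $G$, and for each such $v_j$ add $1$ to the degree ancilla controlled jointly on index qubits $i$ and $j$. Since the induced degree never exceeds $d$, the ancilla needs only $O(\log d)$ qubits, and by Lemma \ref{thm:adder} each controlled $+1$ costs $O((\log d)^2)$. The forward computation therefore costs $O(d(\log d)^2)$ per vertex, and the uncomputation at the end of the vertex's iteration is the same circuit run in reverse, contributing another $O(d(\log d)^2)$.

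Once the degree is in the ancilla, the equality test against the fixed constant $D$ is a single multi-controlled flag flip on $O(\log d)$ qubits, of cost $O(\log d)$; controlled on this flag we then do one $+1$ on the $\#dD$ register, which ranges over values up to $n$ and so by Lemma \ref{thm:adder} costs $O((\log n)^2)$. Summing the three contributions per vertex gives $O(d(\log d)^2 + (\log n)^2)$, and multiplying by the $n$ vertices yields the claimed $O(n((\log n)^2 + d(\log d)^2))$.

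The main obstacle will be bookkeeping the ancilla registers: the degree register must be prepared, tested, conditionally consumed, and then fully uncomputed before moving to the next vertex, so that the additions accumulated in $\#dD$ across different vertices do not entangle with leftover degree information. The essential observation that makes the bound tight (and strictly better than a naive $n^2$ scan) is that the inner loop for vertex $v_i$ ranges only over its $\le d$ neighbors in $G$, not over all of $V$, which is what lets the $d(\log d)^2$ factor replace what would otherwise be $n(\log n)^2$ inside the outer sum.
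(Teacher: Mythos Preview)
Your proposal is correct and follows essentially the same approach as the paper: loop over the $n$ vertices, for each one compute its induced degree into a $O(\log d)$-qubit ancilla by doing at most $d$ controlled increments, test against $D$ and conditionally increment the $O(\log n)$-qubit counter, then uncompute the ancilla. Your version is in fact slightly more careful than the paper's own proof in one place: you explicitly condition on the $i$th index qubit so that vertices not present in the subgraph are skipped, whereas the paper's write-up leaves this control implicit.
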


\begin{proof}
Recall that $d$ is the maximum degree of the graph $G$. 
First, we store the degree of the $i$th qubit, $d$, into an auxiliary $\log d$ qubits. 
This can be done by performing the adder controlled by each qubit to which the target vertex 
is connected. 
This operation requires $\log d$ qubits and $O(d(\log d)^2)$ time complexity. 
Next, if the stored value is $D$, we perform the adder controlled by the auxiliary qubits, on 
the feature state. 
The required space of $\#dD$ is $O(\log n)$, because $\#dD\leq n$. 
The time complexity is $O((\log n)^2)$. 
Finally, we initialize the auxiliary qubits by the inverse operation. 
We perform these calculation recursively from the $0^{\text{th}}$ qubit to the $n-1^{\text{th}}$ qubit. 
As a result, the total time complexity is 
\begin{align}
    & n\times(O(d(\log d)^2) + O((\log n)^2) + O(d(\log d)^2)) \\
    &\hspace{2em} = O(n((\log n)^2+d(\log d)^2)).
\end{align}
\end{proof}

\subsection*{Note on the simulation method}

In the numerical experiment, we studied various type of graph set containing a graph with 
28 vertices, in which case we need a quantum device composed of 56 qubits. 
A naive implementation of the numerical simulator would require a $2^{56}$ memory, 
probably larger than 1 Exabit depending on the accuracy, which is not realistic. 
Hence, we adopted a parallel GPU computation; that is, we compute $\ket{E(G,x)}$ for each 
subgraph $x$ and store it in each GPU memory, which thus needs a $2^{28}$ memory.

\section*{Data Availability}
A complete set of the kernel values and the probabilities for removing the index state are available at 
\url{https://github.com/TRSasasusu/GraphKernelEncodingAllSubgraphsQC}.

\section*{Code Availability}
The codes for computing the kernel values and the classification protocols are available at 
\url{https://github.com/TRSasasusu/GraphKernelEncodingAllSubgraphsQC}.

\printbibliography

\section*{Acknowledgements}
This work was supported by MEXT Quantum Leap Flagship Program Grant Number 
JPMXS0118067285 and JPMXS0120319794.

\end{multicols}
\end{document}